\documentclass{article}
\usepackage[utf8]{inputenc}
\usepackage[T1]{fontenc}
\usepackage[dvipsnames]{xcolor}
\usepackage{url}
\usepackage{booktabs}
\usepackage{amsfonts}
\usepackage{amsmath}
\usepackage{amssymb}
\usepackage{nicefrac}
\usepackage{tikz}
\usepackage{bm}
\usepackage{pgfplots}
\usepgfplotslibrary{patchplots}
\pgfplotsset{compat=1.18}
\usepackage{microtype}
\usepackage{tcolorbox}

\usepackage{graphicx}
\usepackage{algorithm}
\usepackage[noEnd=true]{algpseudocodex}
\algrenewcommand\algorithmicrequire{\textbf{Input:}}
\algrenewcommand\algorithmicensure{\textbf{Output:}}
\usepackage[switch]{lineno}
\usepackage{dirtytalk}
\usepackage{mathtools}
\usepackage{amsthm}
\usepackage{authblk}
\usepackage[a4paper, total={6.5in, 8.5in}]{geometry}

\usepackage{thmtools}
\usepackage{thm-restate}

\declaretheorem[name=Proposition]{proposition}

\usepackage[colorlinks = true, linkcolor=NavyBlue,citecolor=ForestGreen]{hyperref}
\usepackage[nameinlink]{cleveref}

\newtheorem{theorem}{Theorem}

\newtheorem{definition}{Definition}

\newfloat{procedure}{htbp}{lop}
\floatname{procedure}{Procedure}

\usepackage{complexity}

\title{Leveraging Matchings in Constrained Fair Division with a Conflict Graph}
\date{}

\author[1,2]{Evangelos Markakis}

\author[1]{Michalis Samaris}

\affil[1]{Athens University of Economics and Business, Greece}
\affil[2]{Archimedes/Athena RC, Greece}

\begin{document}
\maketitle

\begin{abstract}
We study the problem of allocating indivisible goods under constraints, expressed via a conflict graph $G$. In such an instance, the $m$ items are the vertices of $G$ and connected items cannot be allocated in the same bundle. Under this model, it is already known that EF1 allocations may not exist. Our main contribution is an analysis parametrized by the maximum degree $\Delta(G)=\Delta$ on the existence and computation of complete EF1 allocations. We address this question in various cases by leveraging results from matching theory. First, we provide a tight existence result for agents with ordered valuations and for the broader class of tiered valuations. We present an algorithm that returns an EF1 allocation when then number of items does not exceed a specific bound. This bound is determined by $n$ and $\Delta$, and it is tight when $\Delta$ is greater than $2n/3$. We also construct an approximation algorithm when $m$ exceeds this bound. For general additive valuations the problem becomes more challenging. Given the current impossibility results, we focus on the case where the number of items is at most $2n$. For this case, we provide an almost complete picture for the instances that admit EF1 allocations, by combining Round Robin with matchings.
\end{abstract}

\section{Introduction}

Fair Division with indivisible items has attracted significant research interest in the recent years. 
This can be evidenced by the plethora of new results within the last decade, as exhibited in surveys such as \cite{DBLP:journals/ai/AmanatidisABFLMVW23}, but also by the emergence of motivating applications, including course allocation algorithms \cite{7c65302b-f079-361a-94f1-0c3c9f6fc76b}, and food donation programs \cite{Mertzanidis0V24}, among others.
The main question in this context is to derive fair allocations of the available resources and several fairness criteria have been proposed, as a way to alleviate the fact that more traditional fairness notions, such as envy-freeness and proportionality fail to exist. 

Among the various relaxations of envy-freeness that have been suggested, our work focuses on the well-known EF1 criterion (\textit{envy-freeness up to one good}) defined by \cite{7c65302b-f079-361a-94f1-0c3c9f6fc76b}.
An allocation is said to be EF1 if for any pair of agents $i, j$, for which agent $i$ envies the bundle of agent $j$, the envy is eliminated after the hypothetical removal of some item from the bundle of $j$. 
This is a concept that can be easily satisfied whenever there are no further constraints on the allocation space. Namely, both a simple round-robin algorithm, as well as the  envy cycle elimination algorithm by \cite{lipton2004approximately} are guaranteed to produce EF1 assignments.


The status regarding the existence of EF1 allocations however changes drastically when there are additional restrictions on the feasible assignments. Recently, there have been various works that study fair division problems under constraints \cite{10.1145/3505156.3505162}, motivated by different application scenarios.
Indicatively, some works examine the division of items into categories, with a constraint on the number of items of each category allocated to each agent \cite{biswas2018fair}. 
In another model, 
the items form the vertices of a graph and each bundle must form a connected subgraph \cite{ijcai2017p20,BILO2022197}. 
We refer to our related work section for further exposition.

Our work focuses on yet another graph-theoretic model of constrained fair division that has been initiated recently, where the constraints are modeled via a \textit{conflict graph}. More precisely, the items correspond to the vertex set of the conflict graph and a bundle is not allowed to contain two items that are connected with an edge. 
In other words, the bundles allocated to the $n$ agents must constitute a $n$-coloring in the conflict graph (with each bundle being an independent set). 
This model turns out to impose severe limitations on the existence of fair allocations, if we want to allocate all items (i.e., have a complete allocation). 
Firstly, a complete allocation may simply not exist, even without imposing any fairness criterion. To circumvent this, the mild condition of $\Delta<n$, where $\Delta$ is the maximum degree of the graph does guarantee the existence of a complete allocation (since this is sufficient for the existence of a valid $n$-coloring). 
Even under this assumption, counterexamples for the existence of EF1 allocations have already been identified \cite{Hummel2021,ijcai2025p435}.
We feel this naturally poses the question of understanding further or characterizing when is it possible to achieve EF1 or other fairness concepts under such constraints.


\subsection{Our contribution}
We are interested in the existence and efficient computation of complete, EF1 allocations in the presence of a conflict graph, when the agents have additive valuations. 
The main contribution of our work is that we explore the effect of the maximum degree of the conflict graph, $\Delta$, on the existence of EF1 allocations. For the cases where we establish existence, we also describe efficient algorithms for computing such allocations.
The techniques we use are based on leveraging results from matching theory such as Hall's Theorem for perfect matchings or the structure of vertex sets that form violations to Hall's theorem when no perfect matching exists.
We derive our results by combining these techniques with algorithmic tools of fair division such as Round Robin and envy cycle elimination procedures. 

We start in Section \ref{section: ordered}, where we deal with ordered valuations and the wider class of tiered valuations. Ordered valuations, where the agents agree on the ranking of all items w.r.t. their value has been extensively considered in fair division (e.g. \cite{10.1137/19M124397X}). Tiered valuations form a more natural superclass, where the items are grouped into tiers, and the agents agree on the ranking between the tiers even if they may not agree on the ranking of the items within each tier. 
We consider valuations with tiers of size $n$, meaning that each tier (except maybe the last one) contains exactly $n$ items (so that agents agree on which are the $n$ most valuable items, then the next $n$ most valuable and so on).
For this valuation class we provide in Section \ref{subsection: MRR} an algorithm that returns an EF1 allocation when the number of items, $m$, is bounded as: $m\leq n\left\lfloor \frac{n}{\Delta} \right\rfloor +n-\Delta$.
We show that this result is tight, i.e., when $m$ exceeds the above bound there may not exist EF1 allocations, and this is proved by extending the counterexample of Igarashi et al.  \cite{ijcai2025p435}. 
To alleviate this negative fact when $m> n\left\lfloor \frac{n}{\Delta} \right\rfloor +n-\Delta$, we provide a positive approximation result in Section \ref{subsection: MRR-approach}. Namely, for $\Delta\leq \frac{n}{2}$, and based on our previous algorithm, we provide an approximation algorithm obtaining a $(1-\frac{2}{r+1})$-EF1 allocation where $r=\left\lfloor \frac{n}{\Delta} \right\rfloor$.
Observe that, the ratio of the algorithm approaches $1$ 
as $r$ becomes sufficient small and tends to $0$.

In Section \ref{section: Additive} we move to general additive valuations. The setting here is much more restrictive, as we exhibit that even with identical valuations, an EF1 allocation may not exist when $m> 2n - \Delta$. We therefore focus on understanding EF1 existence when $m\leq 2n$, for which we provide an almost complete picture.
Firstly, we combine the Round Robin algorithm with an appropriately constructed matching to derive an EF1 allocation when either $m\leq 2n-\Delta$ or $m\leq 2n$ and $\Delta\leq\frac{n}{2}$.
Our most technical result is that we extend this algorithm to attain EF1 when $m\leq 2n$ and $\Delta\leq \frac{2n}{3}$. As a complementary result, when $\Delta>\frac{2n}{3}$, we can attain EF1, when $m\leq 2n-\Delta+1$, and $G$ does not have $K_{\Delta,2n-2\Delta+1}$ as a subgraph. The last 2 results are obtained by a repeated application of matching algorithms, in combination with envy cycle removals. 

\subsection{Further related work}
The model of fair division with a conflict graph, that we study here was first introduced by Chiarelli et al. \cite{10.1007/978-3-030-48966-3_12}, where they focus on a different objective, that of maximizing the lowest total profit of items allocated to any agent. 
The existence and computation of complete EF1 allocations under this model, which is also the focus of our work, was later studied by Hummel and Hetland \cite{Hummel2021}, where they established both positive results for some special cases, as well as impossibility results for general additive valuations (e.g., non-existence when $\Delta> n$ or when $n\geq 4$ and with lower values of $\Delta$). 
In the same work they also considered the criteria of MMS and Maximum Nash Welfare under this setting.
Igarashi et al. \cite{ijcai2025p435} proved that a maximal but not necessarily complete EF1 allocation always exists for two agents and any graph, when the agents have monotone valuations. Moreover, they showed NP-hardness results for finding maximal EF1 allocations for three or more agents.
A special case of this problem, that corresponds to interval scheduling was considered by Li et al. \cite{NEURIPS2021_b5b1d9ad} and Kumar et al. \cite{10.5555/3635637.3663155}, under the fairness criteria of EF1 and MMS. Additionally, Kumar et al. \cite{10.5555/3635637.3663155} also considered the case where the items are chores.
Finally, Yoneda and Yoneda \cite{yoneda2026fairdivisionsoftconflicts} studied a setting where the conflict constraints are treated as "soft." In this scenario, allocating adjacent goods to the same agent is allowed and the objective is to find EF1 allocations with a small number of such conflict violations.

There have been several other models of fair division under constraints. 
Biswas et al. \cite{biswas_et_al:LIPIcs.FORC.2023.8} considered a model where items are grouped into categories and each bundle needs to respect a bound on the items received from each
category.
Another graph-related model involves the allocation of connected bundles, where items are located again in a graph, as studied in \cite{ijcai2017p20,BILO2022197}.
Other models concern constraints defined via matroids. For example, \cite{biswas2018fair} considered a setting where the constraints are determined by laminar matroids. Furthermore, Gourves et al. \cite{GMT13} considered a model where the allocated items should form a base of a matroid.
Additional results under matroid constraints regarding MMS and other criteria have also been obtained in \cite{GMT14,gourves2019maximin}.
Yet another model focuses on the scenario where all agents should receive the same number of items (e.g. in assigning shifts) studied in  \cite{FGM14} and more recently in \cite{BOGOMOLNAIA2025,MMP25}.
For an overview of problems involving constrained fair division we refer to the survey by Suksompong \cite{10.1145/3505156.3505162} or to the one by Biswas et al. \cite{Biswas2023}.

When it comes to the unconstrained setting, there is an even bigger volume of research activity within the last decades. 
The lack of envy-free allocations led over the years to various new criteria including the one we study here, defined by \cite{7c65302b-f079-361a-94f1-0c3c9f6fc76b}, as well as the notions of EFX (envyfreeness up to any good) introduced in \cite{GMT14,10.1145/3355902}, MMS defined by \cite{7c65302b-f079-361a-94f1-0c3c9f6fc76b}, and groupwise MMS considered in \cite{barman2018groupwise}. For an overview, we refer to the survey by \cite{DBLP:journals/ai/AmanatidisABFLMVW23}.

\section{Preliminaries}\label{section: Preliminaries}

We consider a setting with $n$ agents and $m$ indivisible items to be allocated to the agents.
We denote by $N$ the set of agents and by $M$ the set of items. 
Each agent $i\in N$ has an \textit{additive} valuation function $v_i$ over the items so that for every $S\subseteq M$, $v_i(S)=\sum_{g\in S} v_i(\{g\})$ and $v_i(\emptyset)=0$. For simplicity, we will use $v_i(g)$ instead of $v_i(\{g\})$, and we also assume that $v_i(g)\geq 0$ for every $g\in M$. We denote by $V$ the set of valuation functions of the $n$ agents, i.e., $V=\{v_i\mid i\in N\}$.

A \textit{complete allocation}, or simply an allocation, denoted by $A=(A_1,A_2,\ldots$, $A_n)$, is a partition of the items $M$, where $\bigcup_{i\in [n]}A_i=M$ and $A_i\cap A_{i^{\prime}}=\emptyset$ for any $i,i^{\prime}\in[n]$ with $i\neq i^{\prime}$. 
The bundle $A_i$ is the set of items assigned to agent $i$.
A \textit{partial allocation} is an allocation over a strict subset of the set of items $M$.

In the constrained setting we study, certain pairs of items are not allowed to be allocated to the same agent. 
These pairs are represented via an undirected \textit{conflict graph} $G=(M,E)$, where its vertices are the set of items i.e.,  $V(G)=M$ and for every edge $(u,v)\in E$, the items $u$ and $v$ are conflicting and cannot be assigned to the same bundle. 
An allocation (complete or partial) is 
\textit{feasible} if no bundle contains a pair of conflicting items.
In other words, a feasible allocation forms a $n$-coloring of $G$ and each bundle $A_i$ is an independent set of $G$. 
The question of interest in our work is to find feasible and complete allocations which also satisfy some fairness criteria.

\begin{definition}
    We denote by $(N,M,V,G)$ an instance of fair allocation with a conflict graph, where
    \begin{itemize}
        \item[-] $N$ is the set of $n$ agents,
        \item[-] $M$ is the set of $m$ items,
        \item[-] $V$ is a set of $n$ additive valuation functions, with $v_i$ being the valuation function of agent $i$,
        \item[-] $G=(M,E)$ is the conflict graph.

    \end{itemize}
\end{definition}

Our analysis will make use of the maximum degree $\Delta(G)$, of a vertex in $G$. For brevity, we will denote the maximum degree by $\Delta$, instead of $\Delta(G)$. We also denote by $N(v)$ the neighbor vertices of a vertex $v$. For a subset of items $S\subseteq V(G)$, we use $N(S)$ for the vertices that are neighbors with at least one vertex belonging to $S$ i.e., $N(S)=\{u\in V \setminus S: (u,v)\in E \text{ for some } v\in S\}$. 
Given a partial allocation $A=(A_1,A_2,\ldots,A_n)$, the items that have as neighbor in $G$ an item allocated to agent $i$ are the \textit{conflicting items of agent $i$} w.r.t. $A$, i.e., these are precisely the items of the set $C_i(A)= N(A_i)$. 

\noindent {\bf Feasibility graph.} We define now an additional graph that will be helpful in formalizing some of our algorithms and their analysis. Given a partial allocation $A$, and the set $U$ of unallocated items in $A$, the \textit{feasibility graph}, denoted as $F(A,U)$, is a bipartite graph where $V(F(A,U))=N\cup U$ and $E(F(A,U))=\{(i,g)\mid g\notin C_i(A)\}$. Hence, in $F(A,U)$  each agent $i$ is connected with the items $g\in U$ such that $A_i\cup \{g\}$ remains feasible.

In the sequel, we will often allocate items to agents based on matchings that we compute in the feasibility graph. For brevity we will be using the procedure below. In particular, given a partial allocation $A$, and a subset $R$ of edges of $F(A,U)$ such that each $g\in U$ belongs to at most one edge of $R$, Procedure~\ref{allocating-edges} allocates for each $(i,g)\in R$ the item $g$ to agent $i$. 
\begin{procedure}
\caption{Allocate($A,R$)}
\label{allocating-edges}
\begin{algorithmic}[1]

\Require Partial allocation $A=(A_1,\dots, A_n)$, $R\subseteq E(F(A,U))$ s.t. every $g$ belongs to at most one edge of $R$.
\For{each $(i,g)\in R$}
    \State $A_i=A_i\cup\{g\}$
\EndFor

\end{algorithmic}
\end{procedure}


Beyond completeness and feasibility, we want allocations that also meet some fairness criterion. 
The fairness criterion in our work is \textit{envy-freeness up to one item} (EF1). 
Given an allocation $A$ we say that an agent $i$ \textit{envies} agent $j$ if $v_i(A_i)<v_i(A_j)$.
An allocation is \textit{envy-free} (EF) if no agent envies the bundle of some other agent i.e., $v_i(A_i)\geq v_i(A_j)$ for every pair of agents $i, j\in N$.
Envy-freeness up to one item is a relaxation of envy-freeness where every agent $i$ is not envious towards any other agent $j$ after the hypothetical removal of some item from the bundle of $j$.
The formal definition of an EF1 allocation  is following.

\begin{definition}
In an instance $(N,M,V,G)$, an allocation $A$ is envy-free up to one item (EF1) if, for every pair of agents $i, j \in N,$ it holds that $v_i(A_i)\geq v_i(A_j \setminus {g})$ for some $g\in A_j$.
\end{definition}

Similarly to the above definition, an allocation is said to be $\alpha-$EF1 if $v_i(A_i)\geq \alpha v_i(A_j \setminus {g})$ for some $g\in A_j$.

Due to the following known proposition, and since we ask for complete matchings, we only consider instances with $\Delta<n$.
\begin{proposition}\cite{Hummel2021}
    For any conflict graph $G=(M,E)$ with $\Delta \geq n$ there is an instance $(N,M,V,G)$ that has no EF1 allocation.
\end{proposition}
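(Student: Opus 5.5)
The plan is to build, for an arbitrary conflict graph with $\Delta \geq n$, a single set of identical additive valuations under which no complete feasible allocation can be EF1. The construction concentrates all value on the neighbourhood of one maximum-degree vertex.

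\textbf{Construction.} Fix a vertex $v \in M$ with $|N(v)| = \Delta \geq n$. Every agent $i \in N$ gets the same valuation: $v_i(u) = 1$ for every $u \in N(v)$, and $v_i(u) = 0$ for every other item, including $v$ itself.

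\textbf{Proof that no EF1 allocation exists.} Suppose for contradiction that $A$ is a complete, feasible, EF1 allocation. If the instance has no complete feasible allocation at all (for instance when $n = 1$ and $G$ has an edge), the claim holds vacuously, so we may assume $A$ exists.
\begin{itemize}
\item Since $A$ is complete, some agent $i$ receives $v$.
\item By feasibility, $A_i$ is an independent set containing $v$, so $A_i \cap N(v) = \emptyset$. Hence $v_i(A_i) = 0$.
\item All $\Delta \geq n$ items of $N(v)$ are then distributed among the remaining $n-1$ agents.
\item By pigeonhole, some agent $j \neq i$ receives at least two items of $N(v)$, so $v_i(A_j) \geq 2$.
\item Removing any single item $g \in A_j$ lowers $v_i(A_j)$ by at most $1$, so $v_i(A_j \setminus \{g\}) \geq 1 > 0 = v_i(A_i)$ for every $g \in A_j$.
\end{itemize}
This violates EF1 for the pair $(i, j)$, a contradiction.

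There is no real obstacle. The only care points are the degenerate case where no complete feasible allocation exists, which is handled vacuously, and checking that the pigeonhole count uses $\Delta \geq n > n-1$. The valuations are nonnegative and additive as the model requires.
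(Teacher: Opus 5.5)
Your proof is correct: giving unit value to the neighbours of a vertex $v$ of degree $\Delta\ge n$ (and zero elsewhere) forces whoever holds $v$ to value her bundle at $0$, while pigeonhole puts two neighbours of $v$ into some other bundle. The paper does not prove this proposition itself---it imports it from Hummel and Hetland---so there is no in-paper argument to compare against. Your self-contained construction is the natural one, and it gives slightly more than stated: no complete $\alpha$-EF1 allocation exists for any $\alpha>0$, since $0<\alpha\cdot v_i(A_j\setminus\{g\})$ for every $g\in A_j$.
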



\subsection{Definitions and tools from matching theory} 
Now, we present some preliminary concepts related to matchings, which will be used extensively in the following sections.

In a bipartite graph $G = (X\cup Y,E)$  a \textit{perfect matching} is a matching that matches every vertex in $X$ to a unique vertex in $Y$ and vice versa. For a perfect matching to exist in a bipartite graph, it is necessary that $|X|=|Y|$.
If $|X|\leq |Y|$ we say that a matching \textit{saturates} $X$ if every vertex in $X$ is matched to a distinct vertex in $Y$. In the sequel, we will extensively use the graph theoretic version of Hall's theorem as follows.

\begin{theorem}[Hall's Theorem, \cite{hall1935}]\label{Hall-Thm}
    A bipartite graph  $G = (X\cup Y,E)$ with $|X|\leq |Y|$ has a matching that saturates all vertices in $X$ if and only if for every subset $ S \subseteq X$, it holds that  $|N(S)| \geq |S|$.
\end{theorem}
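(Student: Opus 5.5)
Hall's Theorem has two directions, and I will handle them separately.

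\emph{Necessity.} Assume a matching $\mu$ saturates $X$. For any $S\subseteq X$, each $x\in S$ is matched to a vertex $\mu(x)\in N(x)\subseteq N(S)$. Since $\mu$ is injective on $X$, the set $\{\mu(x): x\in S\}$ has exactly $|S|$ elements. Therefore $|N(S)|\geq |S|$.

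\emph{Sufficiency.} I will argue by strong induction on $|X|$. The case $|X|=1$ is immediate, since $|N(\{x\})|\geq 1$ supplies a neighbour to match $x$ with. For the inductive step, split according to whether Hall's condition holds with slack.

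Case (a): every nonempty proper subset $S\subsetneq X$ satisfies $|N(S)|\geq |S|+1$. Pick any $x\in X$ and any neighbour $y$ of $x$. Delete both $x$ and $y$. In the remaining graph, each nonempty $S\subseteq X\setminus\{x\}$ loses at most one neighbour, so it still has at least $|S|$ neighbours. The induction hypothesis gives a matching saturating $X\setminus\{x\}$, and adding the edge $xy$ completes it.

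Case (b): some nonempty proper subset $S_0\subsetneq X$ is tight, meaning $|N(S_0)|=|S_0|$. Hall's condition holds for $S_0$ inside the subgraph induced on $S_0\cup N(S_0)$, so induction yields a matching of $S_0$ onto $N(S_0)$. Next, consider the subgraph $G'$ on $(X\setminus S_0)\cup (Y\setminus N(S_0))$. For $T\subseteq X\setminus S_0$, write $N'(T)$ for its neighbourhood in $G'$. Then
\[
|N'(T)| = |N(T\cup S_0)| - |N(S_0)| \geq |T|+|S_0|-|S_0| = |T|.
\]
So $G'$ also satisfies Hall's condition, and induction gives a matching saturating $X\setminus S_0$ inside $Y\setminus N(S_0)$. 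The two matchings use disjoint sets of vertices, so their union saturates $X$.

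The step most likely to need care is the neighbourhood accounting in case (b). It relies on the identity $N'(T)=N(T\cup S_0)\setminus N(S_0)$. This identity holds because a vertex of $Y$ lies outside $N(S_0)$ and is adjacent to $T$ exactly when it lies in $N(T\cup S_0)$ but not in $N(S_0)$. Note that in a bipartite graph $N(S)\subseteq Y$, so the paper's convention of excluding $S$ itself from $N(S)$ has no effect here.
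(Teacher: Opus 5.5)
Your proof is correct. Necessity follows from injectivity of the matching. The sufficiency argument is the standard Halmos--Vaughan induction, and both cases check out. In the slack case, deleting the edge $xy$ removes at most one neighbour from each nonempty $S\subseteq X\setminus\{x\}$. In the tight case, $N'(T)=N(T)\setminus N(S_0)=N(T\cup S_0)\setminus N(S_0)$ and $N(S_0)\subseteq N(T\cup S_0)$ give $|N'(T)|\ge |T|$. The side condition $|X'|\le |Y'|$ also survives in every subinstance, and it would follow anyway from Hall's condition applied to all of $X'$.

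The paper does not prove this statement. It cites Hall's 1935 paper and uses the theorem as a black box, so there is no proof of the paper's to compare yours against.

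A route closer to how the paper actually uses the theorem is the augmenting-path argument. Take a maximum matching $R$, and suppose some $x\in X$ is unsaturated. Let $S$ be the set of $X$-vertices reachable from $x$ by $R$-alternating paths. By maximality, every $Y$-vertex reached this way is matched. Hence $N(S)$ is exactly the set of $R$-partners of $S\setminus\{x\}$, and $|N(S)|=|S|-1$.

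That route is constructive: it gives a polynomial-time algorithm and explicitly exhibits a Hall violator when no saturating matching exists. This is what the paper's algorithm for $m\le 2n$ needs when it extracts a maximal Hall violator of maximum deficiency by alternating-path reachability in the auxiliary graph. Your inductive proof is shorter and purely combinatorial. As written, though, it only certifies existence: deciding which case applies means searching for a tight subset, so it does not directly yield an efficient procedure.
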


Clearly, when $|X|=|Y|$ and Hall's condition holds, the matching is perfect.
In a bipartite graph $G = (X\cup Y,E)$, we say that $(S,N(S))$ is a \textit{Hall violator} if $|S|>|N(S)|$. 
Finally, a known consequence of Hall's Theorem is the following proposition. For the sake of completeness, we provide a proof in the Appendix.

\begin{restatable}{proposition}{HallCor}
\label{Hall-Cor}
    If in a bipartite graph $G=(X\cup Y, E)$, where $|Y|=n^{\prime}$ and $|X|\leq n^{\prime}$, each vertex in $Y$ has degree at least $d$ and each vertex in $X$ has degree at least $n^{\prime}-d$, then there exists a matching saturating every vertex of $X$ in $G$.
\end{restatable}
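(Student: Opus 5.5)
We verify Hall's condition from Theorem~\ref{Hall-Thm} for every nonempty $S\subseteq X$, so that a matching saturating $X$ follows directly. The argument splits according to the size of $S$, with the threshold at $n'-d$.

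\textbf{Case 1: $|S|\le n'-d$.} Pick any vertex $x\in S$. Since $\deg(x)\ge n'-d$, we get
\[
|N(S)|\ge |N(x)|\ge n'-d\ge |S| .
\]

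\textbf{Case 2: $|S|> n'-d$.} We claim that $N(S)=Y$, which gives $|N(S)|=n'\ge |X|\ge |S|$. Suppose instead that some $y\in Y$ has no neighbour in $S$. Then all neighbours of $y$ lie in $X\setminus S$, so
\[
d\le \deg(y)\le |X\setminus S| = |X|-|S| < n'-(n'-d)=d ,
\]
a contradiction. Here we used $|X|\le n'$.

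In both cases Hall's condition holds, so Theorem~\ref{Hall-Thm} yields a matching saturating $X$. The only point needing care is choosing the threshold so that the two cases meet exactly. Case 2 uses the degree bound on $Y$ together with $|X|\le n'$, while Case 1 uses the degree bound on $X$; the threshold $n'-d$ is precisely where both arguments apply. No further obstacle is expected.
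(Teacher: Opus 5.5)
Your proof is correct and follows essentially the same route as the paper: the same case split at $|S| = n'-d$, with the degree bound on $X$ used for small $S$ and the degree bound on $Y$ together with $|X\setminus S|<d$ used for large $S$. The only difference is presentation: you verify Hall's condition directly by showing $N(S)=Y$, whereas the paper argues by contradiction from an assumed Hall violator.
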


\section{Ordered Valuations and Beyond}\label{section: ordered}

We say that the agents have \textit{ordered valuations} if there is an ordering of the items, say $g_1,g_2,\ldots,g_m$, such that $v_i(g_1)\geq v_i(g_2)\geq \ldots \geq v_i(g_m)$ for every $i\in N.$ Ordered valuations are motivated by settings where there is a common perception for the ranking of the items, even if the agents do not agree about their precise  value. The results we present in this section actually hold for a broader class, that we term \textit{valuations with tiers of size $n$}. 
Let $k$ be the minimum integer such that $(k-1)n< m\leq kn$.
When the agents have valuation with (common) tiers of size $n$, the set of items $M$ is partitioned into sets (tiers) $M_1, M_2,\ldots,M_k$.
The set $M_1$ contains the top $n$ most preferred items which are common to all agents, even though they may not agree on the ranking of these items. The set $M_2$ contains the next $n$ items, and so on. Only $M_k$ may contain strictly less than $n$ items.
A formal definition follows.
\begin{definition}
\label{def:tiered}
    The agents have valuations with tiers of size $n$ if there is a partition of the items in tiers $M_1, M_2,\ldots,M_k$, such that
    \begin{itemize}
        \item[(i)] $|M_j|=n$ for every $j\in[k-1]$ and $|M_k|\leq n$.
        \item[(ii)] For any agent $i$ and for every items $g\in M_j$, $g^{\prime}\in M_{j^{\prime}}$ where $j<j^{\prime}$ it holds that $v_i(g)\geq v_i(g^{\prime})$.
    \end{itemize}
\end{definition}

Ordered valuations form a special case of the above definition, where the items belonging to the same tier $M_j$ are also ranked in the same way by all agents.
The following simple proposition  holds for valuations with tiers of size $n$ and thus also for ordered valuations.

\begin{restatable}{proposition}{balancedEFONE} \label{balanced_EF1}
    Consider an instance as per Definition \ref{def:tiered}, with tiers $M_1, M_2,\ldots$, $M_k$. Every allocation $A=(A_1,A_2,\ldots, A_n)$ such that  $|A_i\cap M_j|=1$ for every $i\in [n]$, $j\in[k-1]$, and $|M_k\cap A_i|\leq 1$, is EF1.
\end{restatable}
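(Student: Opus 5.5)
The plan is a direct shifting argument, the standard one for showing that round-robin-like allocations are EF1. Fix two agents $i,j$. For each tier index $t$, let $a_t$ be the unique item of $A_i\cap M_t$ and $b_t$ the unique item of $A_j\cap M_t$, whenever these exist. For $t\le k-1$ both exist by assumption. In $M_k$ each agent has at most one item. It helps to add dummy items of value $0$, so that we may write $a_k$ or $b_k$ even when the corresponding intersection is empty, with $v_i$ of a missing item equal to $0$. Then $A_i=\{a_1,\dots,a_k\}$ and $A_j=\{b_1,\dots,b_k\}$, possibly with some zero-valued entries.

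The key step compares shifted tiers using property (ii) of Definition~\ref{def:tiered}. For every $t\in[k-1]$ we have $a_t\in M_t$ and $b_{t+1}\in M_{t+1}$, so $v_i(a_t)\ge v_i(b_{t+1})$. This also holds when $b_{t+1}$ is absent, since then its value is $0$ and valuations are nonnegative. Summing over $t=1,\dots,k-1$ gives
\[
v_i(A_i)\;\ge\;\sum_{t=1}^{k-1} v_i(a_t)\;\ge\;\sum_{t=2}^{k} v_i(b_t)\;=\;v_i(A_j\setminus\{b_1\}),
\]
where the first inequality drops the term $v_i(a_k)\ge 0$. The removed item $b_1$ belongs to $A_j$, because $A_j\cap M_1$ is nonempty when $k\ge 2$, so this is exactly the EF1 condition with $g=b_1$.

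The only care needed is in the edge cases, and these are the main (minor) obstacle. If $k=1$, then $|A_j|\le 1$ and $v_i(A_j\setminus\{g\})=0$ for $g\in A_j$, so EF1 holds trivially. The case $A_j=\emptyset$ is also trivial, since EF1 is then vacuous or holds with value $0$; I would interpret the definition in the usual way for an empty bundle. Finally, whether or not agent $i$ or agent $j$ receives an item from $M_k$ changes nothing, because a missing item has value $0$ and all inequalities used still hold. Since $i$ and $j$ were arbitrary, $A$ is EF1.
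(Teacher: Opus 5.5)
Your proposal is correct and follows essentially the same argument as the paper: compare agent $i$'s item in tier $t$ with agent $j$'s item in tier $t+1$, sum these $k-1$ inequalities, and remove $j$'s item from $M_1$. Your zero-valued dummies for a missing item in $M_k$, together with the separate $k=1$ case, are just a tidier packaging of the same case split the paper makes.
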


\begin{proof}
For two agents $i, i^{\prime}$, let  $A_i$ and $A_{i^{\prime}}$ be their bundles and let $A_i\cap M_j=\{g_{i}^{j}\} $ and $A_{i^{\prime}}\cap M_j=\{g_{i^{\prime}}^{j}\}$ for every $j\in [k-1]$. If $|A_{i^{\prime}}\cap M_k|=1$ let $g_{i^{\prime}}^{k}$ be the respective item of $A_{i^{\prime}}\cap M_k$.
Assume that $i$ envies $i^{\prime}$. It holds that $v_i(g_{i}^j)\geq v_i(g_{i^{\prime}}^{j+1})$ for every $j\in [k-2]$. Moreover, $v_i(g_{i}^{k-1})\geq v_i(g_{i^{\prime}}^{k})$ if $|A_{i^{\prime}}\cap M_k|=1$ or $v_i(g_{i}^{k-1})\geq 0$ if $|A_{i^{\prime}}\cap M_k|=0$.
Regardless of the value of $|A_{i^{\prime}}\cap M_k|$, summing the $k-1$ respective inequalities side by side in each case implies that $v_i(A_i\setminus M_k)\geq v_i(A_{i^{\prime}}\setminus{g_{i^{\prime}}^{1}})$. Thus,  whether or not $i$ has an item from $M_k$ in her bundle, it follows that $v_i(A_i)\geq v_i(A_{i^{\prime}}\setminus{g_{i^{\prime}}^{1}})$ and consequently $A$ is EF1.

\end{proof}

\subsection{EF1 via Tiered Matching}\label{subsection: MRR}

Our main result in this subsection is an algorithm that returns a complete EF1 allocation when the number of items is bounded by $n\left\lfloor \frac{n}{\Delta} \right\rfloor +n-\Delta$.

Our proposed algorithm iteratively allocates the items to the agents via matchings (one per tier) in the feasibility graph.
We briefly discuss first for how many agents an unallocated item is feasible with their bundles. 
For a partial allocation $A$ and a set of unallocated items $U$, an agent $i$ is not connected with an item $g\in U$ if some item which is a neighbor of $g$ in the conflict graph belongs to $A_i$, i.e., $g\in C_i(A)$. 
Since the conflict graph $G$ has maximum degree $\Delta$,
even if all of the at most $\Delta$ neighbor items of $g$ have already been given to $\Delta$ different agents, there are at least $n-\Delta$ agents whose bundle is feasible with item $g$. Thus, every $g\in U$ is connected with at least $n-\Delta$ agents in $F(A,U)$.
We are now in a position to present the \textit{Tiered Matching Algorithm} (Algorithm~\ref{MRR}).

\begin{algorithm}[!ht]
\caption{Tiered Matching}
\label{MRR}
\begin{algorithmic}[1] 

\Require $(N,M,V,G)$, with valuations having tiers of size $n$

\State Set $A_i=\emptyset$ for every agent $i$. 
\State Let $M_1,M_2,\ldots, M_k$ be the partition of items into the tiers.
\For{each $j$ from $1$ to $k$}
    \State Construct $F(A,M_j)$.
    \State Find a maximum size matching $R_j$ in $F(A,M_j)$.\label{matching-j}
    \State Allocate($A,R_j$)
\EndFor

\State \textbf{return} $A=(A_1,A_2,\ldots,A_n)$

\end{algorithmic}
\end{algorithm}

Because of Proposition~\ref{balanced_EF1}, if in each round of Algorithm~\ref{MRR}, the matching $R$ is perfect (except the last one, where it needs to saturate $M_k$), then the resulting allocation is complete and EF1.
In what follows, we prove that this condition holds when the number of items does not exceed $n\left\lfloor \frac{n}{\Delta} \right\rfloor +n-\Delta$. 

\begin{theorem}\label{MRR-Thm}
    Algorithm~\ref{MRR} runs in polynomial time and returns a complete EF1 allocation when $m\leq n\left\lfloor \frac{n}{\Delta} \right\rfloor +n-\Delta$ and the agents have valuations with tiers of size $n$.
\end{theorem}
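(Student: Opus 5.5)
The plan is to show that in every iteration of Algorithm~\ref{MRR} the matching $R_j$ saturates the tier $M_j$. For $j<k$ this means $R_j$ is perfect; for $j=k$ it means every item of $M_k$ is matched to a distinct agent. Then every agent receives exactly one item from each $M_j$ with $j<k$ and at most one item from $M_k$. The allocation is complete, and it is feasible because every matched edge lies in the feasibility graph. Proposition~\ref{balanced_EF1} then gives EF1. Polynomial running time is immediate: there are $k\le m$ iterations, and each builds a bipartite graph and computes a maximum matching, for example with Hopcroft--Karp.

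Let $r=\lfloor n/\Delta\rfloor$. The bound $m\le nr+n-\Delta$ implies $k\le r+1$, and if $k=r+1$ then $|M_k|\le n-\Delta$. I will use two degree bounds in $F(A,M_j)$ at the start of iteration $j$, assuming all earlier matchings saturated their tiers.
\begin{itemize}
\item \emph{Item side.} As argued before the algorithm, each item $g\in M_j$ has at most $\Delta$ neighbours in $G$, so at most $\Delta$ agents conflict with it. Hence $g$ has degree at least $n-\Delta$.
\item \emph{Agent side.} Agent $i$ holds exactly $j-1$ items, each with at most $\Delta$ neighbours, so $|C_i(A)\cap M_j|\le (j-1)\Delta$. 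Hence $i$ has degree at least $|M_j|-(j-1)\Delta$.
\end{itemize}

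\emph{Case $j\le r$.} First suppose $|M_j|=n$. Apply Proposition~\ref{Hall-Cor} with $Y=M_j$ (so $n'=n$), $d=n-\Delta$, and $X=N$. The required agent-side degree is $n-d=\Delta$. We have $n-(j-1)\Delta\ge\Delta$ exactly when $j\le n/\Delta$, which holds for $j\le r$. So a perfect matching exists, and the maximum matching $R_j$ is perfect.

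If instead $j=k\le r$ and $|M_k|<n$, I pad $M_k$ with $n-|M_k|$ dummy items. These are isolated in $G$ and have value zero, so they change neither the conflicts nor the valuations. With dummies, the agent-side degree is again at least $n-(j-1)\Delta\ge\Delta$ and the item-side degree is at least $n-\Delta$. The same argument gives a perfect matching in the padded graph. Restricting it to the real items shows that a matching saturating $M_k$ exists, so the maximum matching $R_k$ saturates $M_k$.

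\emph{Case $j=k=r+1$.} Here the agent-side bound may be useless, so I use Hall's Theorem (Theorem~\ref{Hall-Thm}) directly with $X=M_k$. For any nonempty $S\subseteq M_k$, each item of $S$ has at least $n-\Delta$ neighbours, so $|N(S)|\ge n-\Delta\ge |M_k|\ge|S|$. Thus a matching saturating $M_k$ exists, and $R_k$ saturates it.

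The main obstacle is this last tier. The agent-side degree bound fails there, and the argument only works because the item-side degree $n-\Delta$ dominates the size of the last tier. This is exactly where the term $n-\Delta$ in the bound on $m$ comes from. The only other point that needs care is the padding step for a partial last tier when $k\le r$.
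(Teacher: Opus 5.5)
Your proof is correct and follows essentially the same route as the paper. For the first $\lfloor n/\Delta\rfloor$ tiers you use Proposition~\ref{Hall-Cor}, with agent degree at least $\Delta$ because $|C_i(A)|\le (j-1)\Delta\le n-\Delta$ and item degree at least $n-\Delta$; a direct Hall argument then handles a final tier of at most $n-\Delta$ items. Your padding with isolated zero-valued dummies for a partial last tier with $k\le r$ adds a detail the paper glosses over, since its inductive step simply asserts $|M_{r+1}|=n$ there.
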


\begin{proof}
We first assume that $m\leq n\left\lfloor \frac{n}{\Delta} \right\rfloor$. 
Then Algorithm~\ref{MRR} would run for at most $\left\lfloor \frac{n}{\Delta} \right\rfloor$ rounds. 
We prove that it returns a perfect matching in each of at most $\left\lfloor \frac{n}{\Delta} \right\rfloor-1$ rounds except the last one and it returns a matching saturating every item of $M_k$ in the last round.
In other words a matching saturating every item in each of at most $\left\lfloor \frac{n}{\Delta} \right\rfloor$ rounds.
We proceed by induction. In round 1, since $C_i(A)=\emptyset$ for every agent $i\in N$, we have $F(A,M_1)=K_{n,n}$. 
Hence, a perfect matching $R_1$ always exists. 
Now, let $A=(A_1,A_2,\ldots,A_n)$ be a partial allocation obtained by Algorithm~\ref{MRR} after round $r$ where in each round $j$, with $j\leq r$, the respective matching $R_j$ is perfect and $r\leq \left\lfloor \frac{n}{\Delta} \right\rfloor -1 $. 
In the partial allocation $A$, every agent $i$ has exactly $r$ items in her bundle $A_i$.
Therefore, since each item is represented by a vertex of the graph $G$ whose degree is at most $\Delta$, the conflict set $C_i(A)$ for each agent $i$ has at most $r\Delta\leq (\left\lfloor \frac{n}{\Delta} \right\rfloor -1)\Delta= \left\lfloor \frac{n}{\Delta} \right\rfloor \Delta- \Delta\leq n-\Delta$ items.
In round $r+1$, since $|C_i(A)|\leq n-\Delta$ and $|M_{r+1}|=n$ it is clear that $|M_{r+1}\cap C_i(A)|\leq n-\Delta$, and thus every agent $i$ has degree at least $\Delta$ in $F(A,M_{r+1})$. 
Combining this with the fact that each item has degree at least $n-\Delta$, it follows from Proposition~\ref{Hall-Cor} that a  matching saturating every vertex of $M_{r+1}$ exists in $F(A,M_{r+1})$.
By Proposition~\ref{balanced_EF1}, this concludes that if $m\leq n\left\lfloor \frac{n}{\Delta} \right\rfloor$, Algorithm~\ref{MRR} returns a complete EF1 allocation.

Now, let $m=n\left\lfloor \frac{n}{\Delta} \right\rfloor+ l$ where $1\leq l \leq n-\Delta$.
Each of the $l$ items belonging to $M_{k^{\prime}}$ where $k^{\prime}=\left\lfloor \frac{n}{\Delta} \right\rfloor+1$ is feasible with at least $n-\Delta$ agents and $l\leq n-\Delta$.
Therefore, in $F(A,M_{k^{\prime}})$ for every $S\subseteq M_{k^{\prime}}$, it holds that $|N(S)|\geq n-\Delta\geq |S|$.
Thus by Theorem~\ref{Hall-Thm} a matching saturating every vertex of $M_{k^{\prime}}$ exists and by Proposition~\ref{balanced_EF1} this concludes that Algorithm~\ref{MRR} returns a complete EF1 allocation when $m\leq n\left\lfloor \frac{n}{\Delta} \right\rfloor +n-\Delta$. 
\end{proof}

\noindent {\bf Tightness.} Igarashi et al. \cite{ijcai2025p435} have shown a counterexample to EF1 existence for any $n\geq 4$ with identical additive valuations and $m=n+2$.
This is not in conflict with our results since the conflict graph in their example is $K_{3,n-1}$ i.e., $\Delta(K_{3,n-1})=n-1$. Indeed,
for $\Delta=n-1$, $n\left\lfloor \frac{n}{\Delta} \right\rfloor=n$ and Theorem \ref{MRR-Thm} guarantees that Algorithm~\ref{MRR} returns an EF1 allocation when $m\leq n+n-\Delta=2n-(n-1)=n+1$. Therefore, since $m=n+2$ in the counterexample by Igarashi et al. \cite{ijcai2025p435}, this shows that our bound on $m$, for which Algorithm~\ref{MRR} returns an EF1 allocation is tight. 
In the following proposition we extend the tightness of Algorithm~\ref{MRR} w.r.t. the number of items, for any  $\Delta>\frac{2} {3}n$. In fact, we  strengthen this even further, and prove an impossibility result even for approximate EF1 allocations, beyond a certain constant ratio, when $m$ exceeds the bound of Theorem \ref{MRR-Thm}.

\begin{proposition}\label{counter-example}
    For any $n\geq 4$ and $\Delta$, such that $\frac{2n+1}{3}\leq \Delta\leq n-1$, there is an instance with identical valuations and $m=n\left\lfloor \frac{n}{\Delta} \right\rfloor +n-\Delta+1$ such that no complete $\alpha$-EF1 allocation exists for any $\alpha >\frac{\sqrt{2}}{2}$.
\end{proposition}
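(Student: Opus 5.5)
The plan is to generalize the $K_{3,n-1}$ counterexample of Igarashi et al.\ to a complete bipartite conflict graph with two item values, chosen so that EF1 fails by a factor of $\sqrt{2}$ in every allocation. First note that $\Delta\geq\frac{2n+1}{3}>\frac{n}{2}$ gives $\left\lfloor \frac{n}{\Delta}\right\rfloor=1$. Hence $m=2n-\Delta+1$.

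\textbf{Construction.} Let $G=K_{s,\Delta}$ with sides $L$ and $B$, where $|L|=s=2n-2\Delta+1$ and $|B|=\Delta$, so $s+\Delta=m$.
\begin{itemize}
\item Items in $L$ have degree $\Delta$.
\item Items in $B$ have degree $s$, and $s\leq\Delta$ is equivalent to $\Delta\geq\frac{2n+1}{3}$. This is exactly where the hypothesis is used, and it gives $\Delta(G)=\Delta$.
\item Valuations are identical: every item of $L$ has value $1$ and every item of $B$ has value $\sqrt{2}$.
\end{itemize}
Since $G$ is complete bipartite, each feasible bundle lies entirely inside $L$ or entirely inside $B$.

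\textbf{No empty bundle.} Suppose some bundle is empty and $\alpha>0$. Then $\alpha$-EF1 forces every other bundle to have at most one item, because all values are positive. That gives $m\leq n-1$, a contradiction. So each agent is either an $L$-agent or a $B$-agent. Let $t\leq s$ be the number of $L$-agents; then there are $n-t\leq\Delta$ $B$-agents.

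\textbf{Case analysis.} Fix $\alpha>\frac{\sqrt{2}}{2}$. There are three exhaustive cases.
\begin{enumerate}
\item Some $L$-agent has exactly one item and some $B$-agent has at least two. Removing any one item from the $B$-agent's bundle still leaves value $\geq\sqrt{2}$. The $L$-agent has $1<\alpha\sqrt{2}$, so $\alpha$-EF1 is violated.
\item Every $B$-agent has exactly one item. Then $n-t=\Delta$, so $t=n-\Delta$, and the $L$-agents share $s=2(n-\Delta)+1$ items. By pigeonhole some $L$-agent has at least $3$ items, i.e.\ value $\geq 2$ after removing one. A $B$-agent has $\sqrt{2}<2\alpha$, so $\alpha$-EF1 is violated.
\item Every $L$-agent has at least two items and some $B$-agent has at least two. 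Then $t\leq s/2=n-\Delta+\frac12$, so $n-t\geq\Delta-\frac12$, and by integrality $n-t\geq\Delta$. The $\Delta$ items of $B$ are then spread over at least $\Delta$ nonempty bundles, so each $B$-agent has exactly one item. This contradicts the case assumption.
\end{enumerate}
If cases 1 and 2 both fail, then either every $L$-agent has at least two items or every $B$-agent has at least two. In either situation some $B$-agent has at least two items, since case 2 fails. If every $L$-agent has at least two, we are in case 3. Otherwise some $L$-agent has a single item while some $B$-agent has at least two, which is case 1. So the three cases cover all allocations, and none is $\alpha$-EF1.

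The main obstacle is picking the two values so that both inequalities, $1<\alpha\sqrt{2}$ and $\sqrt{2}<2\alpha$, fail to be satisfiable for $\alpha>\frac{\sqrt2}{2}$. The ratio $\sqrt{2}$ balances these two constraints and produces the stated threshold. The remaining work is checking that the counting in case 3 goes through, which relies on the integrality step above.
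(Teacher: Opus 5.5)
Your proposal is correct and follows essentially the paper's proof. It uses the same $K_{\Delta,2n-2\Delta+1}$ instance with values $\sqrt{2}$ and $1$: your Case~2 is the paper's case where the $\Delta$ high-value items go to $\Delta$ distinct agents, and your Cases~1 and~3 repackage, in contrapositive form, the paper's pigeonhole argument for the case where fewer than $\Delta$ agents receive them. One wording slip: in your exhaustiveness paragraph the second disjunct should read ``every $B$-agent has exactly one item,'' but the deduction you actually carry out right after it is correct and covers all allocations.
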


\begin{proof}
    Since $\frac{2n+1}{3}\leq \Delta\leq n-1$, we have $\left\lfloor \frac{n}{\Delta} \right\rfloor=1$.
    The counterexample is $K_{\Delta,2n-2\Delta+1}$. Let $V_1$ be the part with $\Delta$ vertices and $V_2$ the part with $2n-2\Delta+1$ vertices.
    Observe that $m=\Delta+2n-2\Delta+1=2n-\Delta+1=\left\lfloor \frac{n}{\Delta} \right\rfloor n +n-\Delta+1$. 
    The agents have the same valuation $v$, where $$v(j)=\begin{cases}
\sqrt{2}, &  j \in V_1 \\
1, &  j \in V_2
\end{cases}$$

Let $A=(A_1,A_2,\ldots, A_n)$ be a complete allocation. Since the conflict graph is a complete bipartite graph, it is clear that for every agent $i$ either $A_i\subseteq V_1$ or $A_i\subseteq V_2$ in any feasible allocation.

If the items of $V_1$ are allocated to $\Delta$ agents by giving one item of value $\sqrt{2}$ to each one, then the items of $V_2$ will be allocated among the remaining $n-\Delta$ agents.  
By the pigeonhole’s principle there will be some agent $i$ who receives at least three items of value $1$ among the $2n-2\Delta+1$ items of $V_2$ i.e., $v(A_i)\geq3$. 
Thus, even if one item $g$ of $A_i$ is removed the agents who are allocated one item of $V_1$ still envy agent $i$. 
Moreover, for any agent $j$ of the $\Delta$ agents receiving one item from $V_1$, it holds that $\frac{v(A_{j})}{v(A_i\setminus \{g\})}\leq\frac{\sqrt{2}}{2}$. Hence the allocation $A$ can be at most a $\frac{\sqrt{2}}{2}$-EF1 allocation.

The other case is that strictly less that $\Delta$ agents are allocated the items of $V_1$. 
Then, at least one agent $i$ would receive at least two items of value $\sqrt{2}$, so that $v(A_i)\geq 2\sqrt{2}$.
The items of $V_2$ would be allocated to at least $n-\Delta+1$ agents. 
Due to the pigeonhole’s principle there will be one agent $j$ that would receive at most one item among the $2n-2\Delta+1$ items of $V_2$. 
Thus, $v(A_{j})\leq 1$ and $j$ still envies $i$ even if one item of $A_i$ is removed. 
Moreover, it holds that $\frac{v(A_{j})}{v(A_i\setminus \{g\})}\leq\frac{1}{\sqrt{2}}=\frac{\sqrt{2}}{2}$.

Therefore, in both cases there is no EF1 allocation and additionally there always exist two agents, $i$ and $j$, such that $\frac{v(A_j)}{v(A_i\setminus \{g\})}\leq\frac{\sqrt{2}}{2}$. This implies that there is no complete $\alpha$-EF1 allocation for any $\alpha >\frac{\sqrt{2}}{2}$ in this instance. 

\end{proof}

Note that for the construction of Proposition \ref{counter-example}, we must have $\Delta\geq \frac{2n+1}{3}$. Otherwise, if $\Delta\leq\frac{2}{3}n$, then the degree of vertices in $V_1$ would be $2n-2\Delta+1\geq \frac{2}{3}n+1$ and the maximum degree of the conflict graph $K_{\Delta,2n-2\Delta+1}$ would not be equal to $\Delta$.
Thus, the counterexample no longer holds when $\Delta\leq\frac{2}{3}n$.

\subsection{Approximation via Tiered Matching}\label{subsection: MRR-approach}

Given the tightness of the previous result when $\Delta > 2n/3$, in this subsection, we investigate further the behavior of Algorithm~\ref{MRR} when there is no restriction on the number of items and for lower values of $\Delta$. More precisely, we explore the case where $\Delta\leq\frac{n}{2}$ (with the agents still having valuations with tiers of size $n$).
Note we can still run Algorithm \ref{MRR}, but for $j>\left\lfloor \frac{n}{\Delta} \right\rfloor$, there is no guarantee that the matching $R_j$ produced in line \ref{matching-j} of Algorithm~\ref{MRR} will be perfect.
If it is not perfect, this implies that there is a Hall violator $(S, N(S))$ in $F(A,M_{j})$ where $S$ is a subset of agents and $|S|>|N(S)|$. Since every item has degree at least $n-\Delta$ in $F(A,M_{j})$, for every subset of agents with size $\Delta+1$ or greater, each item is connected with at least one agent in the subset. Hence that would mean that $|N(S)| = n \geq |S|$. 
Therefore, this implies that $|S|\leq \Delta$.
Consequently $|N(S)|\leq \Delta-1$ and an agent $i$ that is unmatched in $R_j$ has at least $n-(\Delta-1)=n-\Delta+1$ items of $M_j$ in her conflict set.
Let $r=\left\lfloor \frac{n}{\Delta} \right\rfloor$ and $l$ be the minimum integer such that $m\leq lrn$.
In this section, if $(l-1)rn<m<lrn$, by adding zero valued items (without conflicts) we assume that $m=lrn$, so that Algorithm~\ref{MRR} will run for $lr$ rounds.

The following lemma shows a lower bound for the number of items each agent $i$ has in her bundle $A_i$ in an allocation $A$ obtained by Algorithm~\ref{MRR}.
For every $h\in [l]$ we denote by $M^h$ the items of the first $hr$ rounds i.e., $M^h= \bigcup_{j=1}^{hr}M_j$.

\begin{restatable}{lemma}{boundunlemma}
\label{bound-un}
    When $\Delta< n/2$, then for every  $h\in [l]$ and every agent $i\in N$, if $A_i$ is the bundle given to $i$ by Algorithm~\ref{MRR}, it holds that $|M^h \cap A_i|\geq h(r-1)+1$.
\end{restatable}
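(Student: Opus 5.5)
The plan is an amortized counting argument over the rounds of Algorithm~\ref{MRR}. In each round an agent either receives exactly one item or is unmatched ("misses" the round). Write $a_i(t)$ for the number of items agent $i$ holds after round $t$. Then the number of missed rounds among the first $t$ is $t-a_i(t)$. Since $M^h$ consists of the items of the first $hr$ rounds, $|M^h\cap A_i|=a_i(hr)$. The claim $|M^h\cap A_i|\ge h(r-1)+1$ is therefore equivalent to: agent $i$ misses at most $h-1$ of the first $hr$ rounds.

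\textbf{Step 1: a miss forces many conflicts.} Suppose agent $i$ is unmatched by the maximum matching $R_j$ in $F(A,M_j)$. I will not rely only on "some Hall violator exists"; I will attach the violator to $i$ itself. Let $S$ be the set of agents reachable from $i$ by $R_j$-alternating paths.
\begin{itemize}
\item By maximality of $R_j$, every item in $N(S)$ is matched, and its partner lies in $S$. Hence $|N(S)|=|S|-1$.
\item Every item has degree at least $n-\Delta$ in the feasibility graph. As argued before the lemma, if $|S|\ge\Delta+1$ then every item of $M_j$ would be adjacent to $S$, giving $|N(S)|=n$, which contradicts $|N(S)|=|S|-1<n$. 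So $|S|\le \Delta$.
\item Consequently $|N(i)|\le|N(S)|\le\Delta-1$, so at least $n-\Delta+1$ items of $M_j$ lie in $C_i(A)$.
\end{itemize}
Each of these conflicts comes from an edge of $G$ between an item already held by $i$ and an item of $M_j$.

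\textbf{Step 2: double counting.} The tiers are disjoint, so the conflict edges charged to different missed rounds are distinct edges of $G$. All of them are incident to items that $i$ holds at the end of round $t=hr$. There are $a_i(t)$ such items, each of degree at most $\Delta$. Summing over missed rounds gives
\[
(t-a_i(t))(n-\Delta+1)\le \Delta\, a_i(t),\qquad\text{hence}\qquad t-a_i(t)\le \frac{t\Delta}{n+1}.
\]
With $t=hr$ and $r\Delta=\lfloor n/\Delta\rfloor\Delta\le n<n+1$, the number of misses is strictly less than $h$. It is therefore at most $h-1$, which gives $a_i(hr)\ge hr-(h-1)=h(r-1)+1$.

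\textbf{Main obstacle.} The delicate part is Step 1: showing that \emph{every} unmatched agent, not merely some agent, has at least $n-\Delta+1$ conflicting items in $M_j$. The alternating-path construction handles this, and it needs only $\Delta<n$, so the hypothesis $\Delta<n/2$ is more than enough here. The only other point needing care is that the edges charged in Step 2 are not double-counted. This holds because each charged edge has its endpoint in a distinct tier $M_j$ and runs from a held item to an unallocated item of that tier.
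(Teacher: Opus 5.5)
Your proof is correct and uses essentially the same double-counting argument as the paper. Each missed round contributes at least $n-\Delta+1$ conflicting items of that tier, while $|C_i(A)|\le \Delta\,|A_i|$, so $r\Delta\le n<n+1$ forces fewer than $h$ misses in the first $hr$ rounds; the paper phrases this as a contradiction from assuming at least $h$ misses, and your alternating-path argument makes rigorous the per-agent conflict bound that the paper only sketches via the existence of some Hall violator.
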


\begin{proof}
    Assume not and that for some $h\leq l$ an agent $i$ has at most $hr-h$ items in her bundle after $hr$ rounds of Algorithm~\ref{MRR}. This means that $i$ is unmatched in at least $h$ rounds of the algorithm.
    For every round $j$ that $i$ is unmatched in $R_j$ it holds that $|C_i(A)\cap M_j|\geq n-\Delta+1$.
    Hence, in total we have that $|C_i(A)|\geq h(n-\Delta+1)$.
    Moreover, since $|A_i|\leq (hr-h)$ then $|C_i(A)|\leq (hr-h)\Delta$. This implies that \begin{gather*}
    (hr-h)\Delta \geq h(n-\Delta+1)\\
     (r-1)\Delta\geq (n-\Delta+1)\\
     r\Delta\geq n+1\\
    \left\lfloor \frac{n}{\Delta} \right\rfloor \Delta\geq n+1
    \end{gather*}
     This leads to a contradiction and this concludes that  for every $h\leq l$ every agent has at least $hr-h+1$ items in her bundle after $hr$ rounds of Algorithm~\ref{MRR}.   
\end{proof}

Let $U$ be the set of unallocated items after the execution of Algorithm~\ref{MRR}. For every $h$ such that $2\leq h\leq l$, let $U_h$ be the subset of $U$ obtained from rounds $j$ such that $r(h-1)+1\leq j \leq rh$, i.e., $U_h=(\bigcup_{r(h-1)+1}^{rh}M_j)\cap U$.

The next algorithm is based on obtaining first an allocation $A$ by Algorithm~\ref{MRR} (which may not be complete) and then figuring out how to allocate the items of $U$. For every $U_h$, let $N_h$ be the set of agents who have less items from $M^h$ than the number of rounds $rh$ so far. Algorithm~\ref{Unalloc-Items} gives priority to the agents of $N_h$ for the unallocated items of $U_h$.
Then, if there are still items in $U_h$ and no such agent is feasible with some of them, these items are allocated to the other agents ensuring that every agent has at most $h(r+1)-1$ items from $M^h$ in her bundle.

\begin{algorithm}[!ht]
\caption{}
\label{Unalloc-Items}
\begin{algorithmic}[1] 

\Require $(N,M,V,G)$

\State $A = \text{Tiered Matching}(N,M,V,G)$. \label{line:RR}
\For{$h=2$ \textbf{to} $l$}
    \State Set $N_h=\{i\mid |A_i \cap M^h|< rh\}$.
    \While{$U_h\neq \emptyset$ and $\exists g\in U_h$ and $i\in N_h$ s.t. $g\notin C_i(A)$} \label{while1-start}
        \State $A_i=A_i\cup \{g\}$ and remove $g$ from $U_h$.
        \If{$|M^h\cap A_i|=rh$}
            \State Remove $i$ from $N_h$.
        \EndIf
    \EndWhile \label{while1-end}
    
    \State Set $N^{\prime}_h=\emptyset$. 
    \While{$U_h\neq \emptyset$ and $\exists g\in U_h$ and $i\notin N^{\prime}_h$ s.t. $g\notin C_i(A)$} \label{while2-start}
        \State $A_i=A_i\cup \{g\}$ and remove $g$ from $U_h$.
        \If{$|M^h\cap A_i|=h(r+1)-1$} \label{line:check}
            \State Add $i$ to $N^{\prime}_h$.
        \EndIf
    \EndWhile \label{while2-end}
\EndFor

\State \textbf{return} $A=(A_1,A_2,\ldots,A_n)$

\end{algorithmic}
\end{algorithm}

\begin{restatable}{lemma}{unallocatedlemma}
\label{unallocated}
When $\Delta \leq n/2$, the allocation $A$ returned by Algorithm~\ref{Unalloc-Items} is complete  and $h(r-1)+1\leq |M^h \cap A_i|\leq h(r+1)-1$  for every agent $i$ and every $h\leq l$.
\end{restatable}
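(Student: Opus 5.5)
The plan is to analyse Algorithm~\ref{Unalloc-Items} phase by phase. The argument rests on one structural remark: in phase $h$ the algorithm only allocates items of $U_h\subseteq M^h\setminus M^{h-1}$. Consequently, for $h'<h$, the counts $|M^{h'}\cap A_i|$ are never changed after phase $h'$ (or after Tiered Matching, when $h'=1$). We prove the two bounds and completeness of each $U_h$ separately, by induction on $h$.

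\emph{Lower bound.} Lemma~\ref{bound-un} already gives $|M^h\cap A_i|\geq h(r-1)+1$ for the allocation produced by Tiered Matching in line~\ref{line:RR}. Algorithm~\ref{Unalloc-Items} only adds items and never removes them, so the bound persists. Lemma~\ref{bound-un} is stated for $\Delta<n/2$, but its proof only uses $\left\lfloor \frac{n}{\Delta}\right\rfloor\Delta\leq n$, so it also covers $\Delta=n/2$.

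\emph{Upper bound, by induction on $h$.} For $h=1$ there is no phase, and $|M^1\cap A_i|\leq r=(r+1)-1$ because Tiered Matching gives each agent at most one item per tier. For $h\geq 2$, assume $|M^{h-1}\cap A_i|\leq (h-1)(r+1)-1$. The $r$ rounds of Tiered Matching in tiers $M_{r(h-1)+1},\dots,M_{rh}$ add at most $r$ items, so before phase $h$ we have
\[
|M^h\cap A_i|\leq h(r+1)-2 .
\]
The first while-loop only serves agents with fewer than $rh\leq h(r+1)-1$ items of $M^h$ (using $h\geq1$). The second loop stops serving an agent as soon as she reaches $h(r+1)-1$, by the check in line~\ref{line:check}. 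Hence after phase $h$ the count is at most $h(r+1)-1$, and later phases do not change it.

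\emph{Completeness (the main step).} We show that the second while-loop of phase $h$ empties $U_h$. Suppose some $g\in U_h$ remains. The loop terminated, so every agent either lies in $N'_h$ or has $g\in C_i(A)$. Since $g$ has at most $\Delta$ neighbours in $G$, at most $\Delta$ agents conflict with $g$. Therefore at least $n-\Delta$ agents are in $N'_h$, and each of them holds exactly $h(r+1)-1$ items of $M^h$. Every other agent holds at least $h(r-1)+1$ items of $M^h$ by the lower bound. As $h(r+1)-1\geq h(r-1)+1$, the total number of allocated items of $M^h$ is minimised when exactly $n-\Delta$ agents are at the upper value. Thus the number of allocated items of $M^h$ is at least
\[
(n-\Delta)\bigl(h(r+1)-1\bigr)+\Delta\bigl(h(r-1)+1\bigr)=hrn+(h-1)(n-2\Delta)\geq hrn ,
\]
using $\Delta\leq n/2$ and $h\geq 2$. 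Since $|M^h|=hrn$ (after padding with zero-valued items), every item of $M^h$ would be allocated, which contradicts $g$ being unallocated. Hence each $U_h$ is exhausted, and since $U=\bigcup_{h}U_h$ (the set $U_1$ is empty by Lemma~\ref{bound-un}/Theorem~\ref{MRR-Thm}, as the first $r$ tiers are perfectly matched), the allocation is complete.

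The delicate point is this counting step: it needs the lower bound to hold for every agent simultaneously at the moment the loop stops, and the hypothesis $\Delta\leq n/2$ enters exactly there.
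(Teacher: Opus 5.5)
Your proof is correct. For the two bounds you follow the paper: the lower bound is inherited from Lemma~\ref{bound-un} because items are only ever added, and the upper bound comes from the saturation test in line~\ref{line:check}. Your induction on $h$ is more careful than the paper's one-line appeal to that test. Your remark that the proof of Lemma~\ref{bound-un} also covers $\Delta=n/2$ fills a step the paper skips.

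One small tightening: in a phase $h\geq 2$ you should use $rh\leq h(r+1)-2$, not $rh\leq h(r+1)-1$. With only the weaker inequality, an agent could enter the second loop already holding $h(r+1)-1$ items of $M^h$. She would not be in $N'_h$, could receive another item, and the equality test would then never catch her. Since $h\geq 2$, the stronger inequality holds, so this is only a wording fix.

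For completeness, your route differs from the paper's. The paper proceeds in three steps:
\begin{itemize}
\item It first shows $|N_h|\leq\Delta$ after the first loop.
\item It then counts with three classes of agents: those in $N_h$ at $\geq h(r-1)+1$, those in $N'_h$ at $h(r+1)-1$, and the rest at $\geq hr$. This maintains the invariant $|N'_h|\leq\Delta-1$ as long as $U_h\neq\emptyset$.
\item It concludes that every leftover item has at least $n-2\Delta+1\geq 1$ feasible unsaturated agents.
\end{itemize}
You instead argue by contradiction at termination. A leftover item forces $|N'_h|\geq n-\Delta$, and a two-class count that puts everyone else at the floor of Lemma~\ref{bound-un} already gives $hrn+(h-1)(n-2\Delta)\geq |M^h|$ allocated items.

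The balancing identity, and the place where $\Delta\leq n/2$ enters, are the same in both. Your version never uses $N_h$ or the first while loop, so it is shorter and shows that completeness does not depend on that loop at all. The paper's version buys a running invariant on $|N'_h|$ and an explicit count of available agents at every step, which is more than the lemma needs.
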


\begin{proof}
    For the allocation $A$ as obtained by  Algorithm~\ref{MRR} in line \ref{line:RR} of Algorithm~\ref{Unalloc-Items} it holds that $h(r-1)+1<|M^h\cap A_i|\leq hr$ for every $h\in[l]$ by Lemma~\ref{bound-un}.
    Observe that the remaining of Algorithm~\ref{Unalloc-Items} does not remove items from any agent's bundle that are already allocated by Algorithm~\ref{MRR}. Therefore, $h(r-1)+1<|M^h\cap A_i|$ also holds for the returned allocation.
    Moreover, it is clear from the description of the algorithm (line \ref{line:check}) that if $|M^h\cap A_i|=h(r+1)-1$ for some $h\in [l]$ and some agent $i$, then no more items of $\bigcup_{j=1}^{hr}M_j$ are allocated to $A_i$. 
    This implies that $|M^h\cap A_i|\leq h(r+1)-1$ for every $h\in [l]$ and  $i\in N$.
    It remains to show that the returned allocation $A$ of  Algorithm~\ref{Unalloc-Items} is complete. 
     The items of $M_j$ for $j=1,2,\ldots, r$ have already been allocated in $A$.
     For every $h\geq 2$, after the while loop in lines \ref{while1-start}-\ref{while1-end}, if $U_h\neq \emptyset$, we claim that at least $n-\Delta$ agents have $hr$ items in their bundle or equivalently $|N_h|\leq \Delta$.
     Otherwise, if $|N_h|\geq \Delta+1$, since every $g\in U_h$ is feasible with at least $n-\Delta$ agents, then there is at least one agent $i\in N_h$ where $g\notin C_i(A)$ which contradicts the assumption that the while loop in lines \ref{while1-start}-\ref{while1-end} has terminated.
     Up to that point, every agent has at most $(h-1)(r+1)-1$ items from $M^{h-1}$ and at most $r$ from $M^h\setminus M^{h-1}$. Thus, overall at most $(h-1)(r+1)-1+r=h(r+1)-2$.
     At each verification of the while loop condition in line \ref{while2-start}, we will show that if $U_h\neq \emptyset$ then $|N^{\prime}_h|=n_2\leq \Delta-1$.
     The agents in $N_h$ have at least $h(r-1)+1$ where $|N_h|=n_1\leq \Delta$ and the agents neither in  $N_h$ nor in $N^{\prime}_h$ have at least $hr$ items. 
     Since $|M^h|=nhr$, then
      \begin{gather*}
      nhr\geq |\bigcup_{i\in N}(M^h \cap A_i)|= \\
      |\bigcup_{i\in N_h}(M^h \cap A_i)|+|\bigcup_{i\in N^{\prime}_h}(M^h \cap A_i)|+\\+|\bigcup_{i\notin (N_h\cup N^{\prime}_h)}(M^h \cap A_i)|\geq  \\
       n_1(h(r-1)+1) +n_2(h(r+1)-1)+(n-n_1-n_2)hr= \\
       n_1h+n_1+n_2h-n_2+nhr= \\
      nhr+(n_2-n_1)(h-1) 
      \end{gather*}
      
      If $n_2\geq \Delta$, as $n_1\geq \Delta$, observe that the above inequality can only hold with equality when $n_1=n_2=\Delta$ for every $h\geq 2$. 
      Therefore, if in some verification of the condition of the while loop in line \ref{while2-start} $|N^{\prime}_h|=\Delta$ then all the items of $M_h$ have already been allocated and $U_h=\emptyset$.
      Thus, as long as $U_h\neq \emptyset$, then $|N^{\prime}_h|\leq \Delta-1$.

      For every $g$ that remains in $U_h$ in the while loop of lines \ref{while2-start}-\ref{while2-end} from its at least $n-\Delta$ feasible agents, at most $\Delta-1$ of them belong to $N^{\prime}_h$.
      Therefore, there are remaining $n-\Delta-(\Delta-1)=n-2\Delta+1\geq 1$ (recall $\Delta\leq n/2$), that are feasible and not belonging to $N^{\prime}_h$. 
      This completes the proof, as every item in $U_h$ can be allocated to some agents and the allocation $A$ returned by Algorithm~\ref{Unalloc-Items} is complete. 
    \end{proof}

By Lemmas~\ref{bound-un} and~\ref{unallocated} the best and the worst case for a bundle $A_i$ of an agent $i$ are as represented in Table~\ref{bundles}.
Observe that in $A_b$ it holds that $|M^h\cap A_b|=h(r+1)-1$ for every $h\in [l]$.
The additional items of $A_b$ arise from the best possible rounds. For every $h\geq 2$ there are two items of round $M_{(h-1)r+1}$ in the bundle of $A_b$. 
This implies that the next possible round that $A_b$ may contain an extra item is $M_{hr+1}$, where there is also one extra item, because of the inequality $|M^h\cap A_b|\leq h(r+1)-1$ and so on.
Respectively, for $A_w$ it holds that $|M^h\cap A_w|=h(r-1)+1$ for very $h\in [l]$ and the lost items of $A_w$ occur in the worst possible way satisfying the inequality $|M^h\cap A_w|\geq h(r-1)+1$.


\begin{table*}[tbh]
\centering

\begin{tabular}{l|llll|llll|l|llll}
 & $M_1$ & $M_2$ & $\ldots$ & $M_r$ & $M_{r+1}$  &$M_{r+2}$  & $\ldots$ & $M_{2r}$  & $\ldots$ & $M_{(l-1)r+1}$ & $M_{(l-1)r+2}$ & $\ldots$ & $M_{lr}$    \\ \hline
$A_{b}$ & $1$ & $1$ & $1$ & $1$ & $2$ & $1$ & $1$  & $1$ & $\ldots$ & $2$ & $1$ & $1$ &  $1$    \\ \hline
$A_{w}$ & $1$ & $1$ & $1$ & $1$ & $0$ & $1$ & $1$  & $1$ & $\ldots$ & $0$ & $1$ & $1$ &  $1$    \\ \hline
\end{tabular}
\caption{The best and the worst case for the bundles $A_i$. The entries in the columns correspond to $|A_b\cap M_j|$ and $|A_w\cap M_j|$ for the respective column $M_j$. }
\label{bundles}
\end{table*}

\begin{theorem}
   When $\Delta \leq n/2$ and the agents have valuations with tiers of size $n$,  Algorithm~\ref{Unalloc-Items} runs in polynomial time and returns an allocation $A$ which is a $(1-\frac{2}{r+1})$-EF1 where $r=\left\lfloor \frac{n}{\Delta} \right\rfloor$. 
\end{theorem}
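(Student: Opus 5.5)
Polynomial running time is immediate. Algorithm~\ref{MRR} computes at most $lr$ maximum bipartite matchings. Algorithm~\ref{Unalloc-Items} then performs at most $|U|\le m$ assignments, and each assignment needs only a polynomial scan over pairs $(g,i)$ with a feasibility check against $C_i(A)$. Feasibility of the output is maintained throughout, since every item is added only when $g\notin C_i(A)$. Completeness is Lemma~\ref{unallocated}. So the real content is the approximation ratio.

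For the ratio I would fix two agents $i,i'$, write $v=v_i$, and compare $v(A_i)$ with $v(A_{i'}\setminus\{g\})$, where $g$ is a top item of $A_{i'}$ in $M_1$. The plan is to reduce, via the constraints of Lemma~\ref{unallocated} and the tier structure, to the extremal pair in Table~\ref{bundles}, with $A_i$ worst-case ($A_w$) and $A_{i'}$ best-case ($A_b$). Concretely, I would define the per-tier count profiles $a_j=|A_i\cap M_j|$ and $b_j=|A_{i'}\cap M_j|$. By Lemma~\ref{unallocated} the prefix sums satisfy $\sum_{j\le hr}a_j\ge h(r-1)+1$ and $\sum_{j\le hr}b_j\le h(r+1)-1$ at every block boundary $hr$. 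Within a block the algorithm gives each agent at most one item per tier from Algorithm~\ref{MRR}, plus extra items only inside that block.

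The key step is an exchange/charging argument of the type in Proposition~\ref{balanced_EF1}. Tier monotonicity means every item of $M_j$ is worth at least every item of $M_{j'}$ for $j<j'$. I would then match items of $A_{i'}\setminus\{g\}$ to items of $A_i$ so that each item of $A_i$ is charged by at most a bounded number of items of $A_{i'}$ that lie in the same or later tiers. Block by block, $A_{i'}$ has at most $r+1$ items in block $h$ (except block $1$, where it has at most $r$, and we removed one). Meanwhile $A_i$ has at least $r-1$ items in block $h$, and at least $r$ items in block $1$. Shifting by one block, each item of $A_{i'}$ in block $h+1$ is dominated by items of $A_i$ in block $h$.

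The resulting inequality I would aim for is $v(A_{i'}\setminus\{g\})\le \frac{r+1}{r-1}v(A_i)$, giving ratio $\frac{r-1}{r+1}=1-\frac{2}{r+1}$. The argument would charge the at most $r+1$ items of each later block of $A_{i'}$ to the at least $r-1$ items of $A_i$ in an earlier or equal block. Block $1$ of $A_i$, with $r$ items versus $r-1$ remaining items of $A_{i'}$, absorbs the slack. To make the charging uniform I would use a fractional version: spread each item of $A_i$ with weight $\frac{r+1}{r-1}$ and verify the Hall-type prefix condition. That condition reads: for every prefix of tiers $\le t$, the $A_i$-count times $\frac{r+1}{r-1}$ is at least the $(A_{i'}\setminus g)$-count on tiers $\le t$. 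The prefix inequalities from Lemmas~\ref{bound-un} and~\ref{unallocated} give exactly this at block boundaries. Inside a block, per-tier counts are at most $1$ except the extra items, and the worst placement is the one in Table~\ref{bundles}. A standard majorization argument, valid because values are nonincreasing across tiers, then converts prefix-count domination into value domination.

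The main obstacle is the intra-block prefix condition. Lemma~\ref{unallocated} controls counts only at block boundaries $hr$, whereas extra or missing items can sit at any tier within a block. I would handle this by observing that the algorithm only places items of $U_h$ that came from tiers of block $h$. Hence in the worst case $A_{i'}$'s extras sit at the first tier of the block and $A_i$'s gaps also sit there, which is precisely the Table~\ref{bundles} configuration. I would check the prefix inequality at tier $(h-1)r+1$, which is the binding one, and verify it holds with ratio $\frac{r+1}{r-1}$, using $r\ge 2$ from $\Delta\le n/2$.
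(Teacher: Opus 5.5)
Your overall route is the paper's: like the paper, you reduce to the extremal pair $A_w,A_b$ of Table~\ref{bundles} and then compare shifted blocks. The running-time and completeness parts are fine. The gap is exactly the step you flag as the main obstacle, and two of the ingredients you propose for it do not work.

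First, the prefix condition must be shifted by one tier. Tier monotonicity says nothing about two items of the same tier: agent $i$ may value her own items of $M_t$ at $0$ and the items of $A_{i'}$ in $M_t$ as highly as items of $M_{t-1}$. So items of $A_{i'}\setminus\{g\}$ in $M_t$ can only be charged to items of $A_i$ in strictly earlier tiers. Writing $M_{\le t}=M_1\cup\dots\cup M_t$, what the majorization step actually needs is $|(A_{i'}\setminus\{g\})\cap M_{\le t}|\le\frac{r+1}{r-1}\,|A_i\cap M_{\le t-1}|$ for all $t$. For the Table~\ref{bundles} pair itself this shifted condition does hold, so this issue is repairable.

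Second, the per-block counts you use (at most $r+1$ items of $A_{i'}$ and at least $r-1$ of $A_i$ in each block) do not follow from Lemmas~\ref{bound-un} and~\ref{unallocated}. Those lemmas bound only cumulative counts on $M^h$. In particular, an agent that is unmatched in late tiers of block $h$ can be refilled, up to the cap $h(r+1)-1$, with extra items of $U_h$ taken from the first tier of block $h$.

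This is why ``the worst placement is Table~\ref{bundles}'' does not follow. Take $r\ge 4$ and build the following pattern.
\begin{itemize}
\item $A_{i'}$ is matched in every tier of blocks $1$ and $2$ and receives one extra item in block $2$.
\item In block $3$, $A_{i'}$ is matched in tier $2r+1$, unmatched in tiers $2r+2$ and $2r+3$, and receives three extra items from $M_{2r+1}$. So $|M^3\cap A_{i'}|=3r+2$, of which $3r-2$ are matched items.
\item $A_i$ misses one tier of block $2$ and receives no extra item there.
\end{itemize}
Every fact you invoke is satisfied: at most one matched item per tier, extras of $U_h$ lying in block $h$, both lemmas (with equality), and the caps $rh$ and $h(r+1)-1$ of the two while loops. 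Yet $|(A_{i'}\setminus\{g\})\cap M_{\le 2r+1}|=2r+4$, while $|A_i\cap M_{\le 2r}|=2r-1$.

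Now let agent $i$ value every item of $M_1,\dots,M_{2r}$ and the four items of $A_{i'}\cap M_{2r+1}$ at $1$, and everything else at $0$; this is a valid tiered valuation. Then $v_i(A_i)=2r-1<\frac{r-1}{r+1}(2r+4)\le\frac{r-1}{r+1}\,v_i(A_{i'}\setminus\{g\})$ for every $g$. So the reduction to Table~\ref{bundles} needs an additional structural property of Algorithm~\ref{Unalloc-Items} that excludes such patterns, and you do not supply one. Since the algorithm never looks at valuations, if the pattern is realizable the stated ratio itself fails.

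Note also that the paper's own proof asserts Table~\ref{bundles} as the extremal case on the basis of the same two lemmas. You therefore cannot borrow it to fill this step.
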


\begin{proof}
The best and the worst case for a bundle $A_i$ of allocation $A$  returned by Algorithm~\ref{Unalloc-Items}, denoted as $A_b$ and $A_w$ respectively, are shown in Table~\ref{bundles}. 
We will show that $v_i(A_w)\geq (1-\frac{2}{r+1})v_i(A_b\setminus \{\alpha\})$ where $A_b\cap M_1=\alpha$ and $i$ is the agent that $A_w$ is allocated to. 
Let $A_{w}^{h}=\bigcup_{j=(h-1)r+1}^{hr}A_w\cap M_j$  for every $h\in[l]$.

Because of the ordered valuations, for every $h\in[l-1]$ it holds that $v_i(A_b\cap M_{hr+1})|\leq \frac{2}{r-1}v_i(A_{w}^{h}))$ since every one of $r-1$ items of $A_{w}^{h}$ ($r$ if $h=1$) has at least as much value as every of two items of $A_b\cap M_{hr+1}$.
Observe that if we remove the items of $A_b\cap M_{hr+1}$ from $A_b$ for every $h\in [l-1]$ then both agents have exactly one item from each other $M_j$.
This implies that 

\begin{gather*}
v_i(A_w)\geq v_i(A_b\setminus \{\alpha\})-\sum_{h=1}^{l-1} v_i(A_b\cap M_{hr+1}) \\
v_i(A_w)\geq v_i(A_b\setminus \{\alpha\})-\sum_{h=1}^{l-1} \frac{2}{r-1}v_i(A_{w}^{h})
\end{gather*}

Observe that $\sum_{h=1}^{l-1} v_i(A_{w}^{h})=v_i(A_w)-v_i(A_w^l)\leq v_i(A_w)$. Then,

\begin{gather*}
v_i(A_w)\geq v_i(A_b\setminus \{\alpha\})-\frac{2}{r-1}v_i(A_w) \\
(1+\frac{2}{r-1})v_i(A_w)\geq v_i(A_b\setminus \{\alpha\}) \\
v_i(A_w)\geq (1-\frac{2}{r+1})v_i(A_b\setminus \{\alpha\})
\end{gather*}
This completes the proof that $A$ is $(1-\frac{2}{r+1})$-EF1.     
\end{proof}

\section{General Additive Valuations}\label{section: Additive}

  In this section we consider general additive valuations. We know already from the previous section that for exact EF1 allocations, we need $m\leq n\left\lfloor \frac{n}{\Delta} \right\rfloor +n-\Delta$. Here we restrict our study even further to the case of  $m\leq 2n$. To motivate this, recall that since $\Delta < n$,
  Proposition~\ref{counter-example} already implies that even when the agents have identical additive valuations, if $\Delta>\frac{2}{3}n$ and $m>2n-\Delta$, an EF1 allocation may not exist. We therefore find it important to understand EF1 existence in the still non-trivial case when $m$ is bounded by a small multiple of $n$.  

We first show that even for general additive valuations, an EF1 allocation can be computed when $m \leq 2n-\Delta$. Recall that this is the tight bound in the number of items for EF1 allocations established in the previous section.   
In particular, Algorithm~\ref{RR-Match}, which we analyze in the sequel, returns an EF1 allocation when $m\leq 2n-\Delta$ or when $\Delta\leq \frac{n}{2}$ and $m\leq 2n$. 

\begin{algorithm}[!ht]
\caption{}
\label{RR-Match}
\begin{algorithmic}[1] 

\Require $(N,M,V,G)$

\State Set $A_i=\emptyset$ for every agent $i$.
\State Run 1 round of Round-Robin so that each agent gets 1 item. \label{round-robin}
\State Let $M_2$ be the unallocated items. Construct $F(A,M_2)$.
\State Find a maximum size matching $R$ in $F(A,M_2)$. \label{RRline}
\State Allocate($A,R$)
\State \textbf{return} $A=(A_1,A_2,\ldots,A_n)$

\end{algorithmic}
\end{algorithm}


\begin{restatable}{theorem}{RRTHM}
\label{RR-Thm1}  For an instance $(N,M,V,G)$ with additive valuations, Algorithm~\ref{RR-Match} runs in polynomial time and returns an EF1 allocation if one of the following conditions holds
\begin{itemize}
    \item[(i)]  $m\leq 2n-\Delta$.
    \item[(ii)]  $\Delta\leq \frac{n}{2}$ and $m\leq 2n$.
\end{itemize}
\end{restatable}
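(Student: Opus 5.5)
Two things have to be shown: the matching $R$ in line~\ref{RRline} saturates all of $M_2$, so the allocation is complete, and the resulting allocation is EF1. I will treat them separately. Polynomial running time is immediate, since one round of Round-Robin and one maximum bipartite matching computation both take polynomial time. If $m\le n$, all items are handed out during the Round-Robin round, and one round of Round-Robin is EF1 because every bundle has at most one item. So from now on assume $n<m\le 2n$, which means $|M_2|=m-n\le n$.

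\textbf{Completeness.} After the first round every agent holds exactly one item, so every agent has at most $\Delta$ conflicting items. By the argument given before Algorithm~\ref{MRR}, each $g\in M_2$ is adjacent in $F(A,M_2)$ to at least $n-\Delta$ agents, because its at most $\Delta$ neighbors sit in at most $\Delta$ bundles.

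In case (i) we have $|M_2|=m-n\le n-\Delta$. Hence every $S\subseteq M_2$ satisfies $|N(S)|\ge n-\Delta\ge |S|$, and Theorem~\ref{Hall-Thm} gives a matching saturating $M_2$.

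In case (ii), take $X=M_2$ and $Y=N$, so $|Y|=n$. Each agent has at most $\Delta$ conflicts, so its degree is at least $|M_2|-\Delta$. This is weaker than needing degree $d$ against all of $Y$, so I will argue directly via Hall instead. Let $S\subseteq M_2$ with $|S|>|N(S)|$. Since $|N(S)|\ge n-\Delta\ge n/2$, we get $|S|>n/2$. Every agent outside $N(S)$ conflicts with all of $S$, so each such agent has at least $|S|>n/2\ge\Delta$ conflicts, contradicting the bound of at most $\Delta$ conflicts per agent. The only exception is when no agent lies outside $N(S)$; but then $|N(S)|=n\ge|S|$, which is not a violation. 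So Hall's condition holds and $M_2$ is saturated. This is the step to double-check: the bound $|S|\ge \Delta+1$ needs the strict inequality $|S|>n/2\ge\Delta$, which does hold.

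\textbf{EF1.} Each agent receives its Round-Robin item plus at most one matched item. Let the Round-Robin order be $1,\dots,n$, with $g_i$ picked by agent $i$, and consider agents $i$ and $j$.

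If $i<j$, agent $i$ picked $g_i$ while $g_j$ was still available, so $v_i(g_i)\ge v_i(g_j)$. Therefore $v_i(A_i)\ge v_i(g_i)\ge v_i(g_j)\ge v_i(A_j\setminus\{h\})$, where $h$ is the second item of $A_j$ if there is one; if $A_j=\{g_j\}$, remove $g_j$ itself.

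If $i>j$, remove $g_j$ from $A_j$. What remains is at most one item $h\in M_2$. The item $h$ was still unallocated when $i$ picked, so $v_i(g_i)\ge v_i(h)$, and hence $v_i(A_i)\ge v_i(A_j\setminus\{g_j\})$.

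In both cases the envy of $i$ toward $j$ disappears after removing one item, so the allocation is EF1. The main obstacle is therefore the Hall argument in case (ii); the fairness part is routine.
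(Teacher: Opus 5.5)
Your proof is correct and follows essentially the paper's route: Hall's condition from the $n-\Delta$ lower bound on item degrees for (i), and for EF1 you remove the envied agent's Round-Robin item. Your case $i>j$ argument already works for every pair, so the split into $i<j$ and $i>j$ is unnecessary. For (ii), the paper invokes Proposition~\ref{Hall-Cor}, while your direct Hall argument (an agent outside $N(S)$ has all of $S$ among its at most $\Delta$ conflicts) is that proposition's proof phrased via conflicts. Yours handles $|M_2|<n$ directly, whereas the paper's degree count for agents implicitly takes $|M_2|=n$.
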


\begin{proof}
    (i) Let $M_1$ be the items allocated in line \ref{round-robin}. In $F(A,M_2)$ a matching  saturating every vertex of $M_2$ always exists since $|M_2|\leq n-\Delta$ and every item of $M_2$ is feasible with at least $n-\Delta$ agents.
    Every bundle $A_i$ has either one item which belongs to $M_1$ or two items, one belonging to $M_1$ and one to $M_2$. 
    If an agent $i$ envies another agent $i^{\prime}$ and $|A_{i^{\prime}}|=2$ then $A_{i^{\prime}}=\{g_1,g_2\}$ where $g_1\in M_1$ and $g_2\in M_2$.
    Then $v_i(A_i)\geq v_i(A_{i^{\prime}}\setminus\{g_1\})$ otherwise agent $i$ would have chosen $g_2$ instead of her item with maximum value chosen in line 4 of Algorithm~\ref{RR-Match}.
    Thus, $A$ is EF1. Also, if $|A_{i^{\prime}}|=1$, then agent $i$ trivially satisfies the EF1 condition w.r.t. $i^{\prime}$.

    (ii) 
    By Proposition~\ref{Hall-Cor} since $\Delta\leq \frac{n}{2}$ a perfect matching in $F(A,M_2)$ always exists because every agent is feasible with at least $n-\Delta\geq n-\frac{n}{2}\geq \Delta$ and every item is feasible with at least $n-\Delta$ agents.
    The rest of the proof is in the same manner as in (i). 
    \end{proof}

If $m\geq 2n-\Delta+1$ and $\Delta>\frac{n}{2}$, Algorithm~\ref{RR-Match} cannot work. In particular, there is no guarantee that a matching saturating every vertex of $M_2$ in $F(A,M_2)$ exists in line~\ref{RRline} of Algorithm~\ref{RR-Match} (a property that is used for the cases handled by the proof of Theorem \ref{RR-Thm1}). To remedy this, we propose
Algorithm~\ref{RR-match-v2}, a technically more involved algorithm, as an extension of Algorithm~\ref{RR-Match}, that returns an EF1 allocation when $m\leq 2n$ and $\Delta\leq\frac{2n}{3}$. This results in the following theorem, which is the main technical result of this section. 

\begin{restatable}{theorem}{THMEFONE}
\label{Thm EF1}
    In an $(N,M,V,G)$ instance where $\Delta\leq\frac{2n}{3}$ and $m\leq 2n$, Algorithm~\ref{RR-match-v2} runs in polynomial time and returns an EF1 allocation.
\end{restatable}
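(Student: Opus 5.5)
The plan is to mirror the proof of Theorem~\ref{RR-Thm1}. We run one round of Round-Robin, so each agent $i$ receives one item $g_i\in M_1$, and then try to place $M_2=M\setminus M_1$ via a maximum matching $R$ in $F(A,M_2)$; note $|M_2|\leq n$. If $R$ saturates $M_2$, the EF1 argument of Theorem~\ref{RR-Thm1}(i) applies verbatim. Every bundle has the form $\{g_i\}$ or $\{g_i,h\}$ with $h\in M_2$, and an envious agent removes the Round-Robin item of the envied bundle. So the whole difficulty lies in the case where Hall's condition (Theorem~\ref{Hall-Thm}) fails for $M_2$. The extension (Algorithm~\ref{RR-match-v2}) will repair such failures by changing the first-round assignment and then rematching. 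Polynomial time will follow because each repair step strictly increases a potential, for instance the number of agents whose first item is fixed or the size of the maximum matching, and each step costs one matching computation plus one envy-cycle elimination.

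First I would analyze the structure of a Hall violator $S\subseteq M_2$ with $|N_F(S)|<|S|$. Every item has at least $n-\Delta$ feasible agents, so $|S|>n-\Delta\geq n/3$. Let $T=N\setminus N_F(S)$, so $|T|\geq n-|S|+1$. For $i\in T$, every $h\in S$ conflicts with $A_i=\{g_i\}$, so $g_i$ is adjacent to all of $S$. This forces $|S|\leq\Delta$. Moreover, each $h\in S$ has at least $|T|$ of its at most $\Delta$ neighbours among the items $\{g_i:i\in T\}$. So $|T|\leq\Delta$, and $h$ has at most $\Delta-|T|$ further neighbours. These ``hub'' items $g_i$, $i\in T$, are the obstruction.

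The repair step is this. For an agent $i\in T$ (chosen in Round-Robin order, say the latest-picking one), take back $g_i$ and give $i$ instead an item $h\in S$ that $i$ values most. Now $g_i$ becomes an item of the second layer, and we rematch $M_2'=(M_2\setminus\{h\})\cup\{g_i\}$. The hub $g_i$ is adjacent to all of $S$ but, having degree at most $\Delta$, it is feasible with at least $n-\Delta$ agents. The counting $|S|>n-\Delta$, $|T|\leq\Delta$, together with $\Delta\leq 2n/3$ (so that $n-\Delta\geq \Delta/2$), should show that after a bounded number of such swaps no Hall violator remains. The way I would argue this is that each swap removes one hub and shrinks the deficiency $|S|-|N_F(S)|$ of every maximal violator by at least one.

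To preserve EF1, I would maintain the invariant that every two-item bundle is $\{x,y\}$ where every other agent values its own ``first'' item at least as much as $y$. After a swap, agent $i$ may now envy bundles containing $g_i$ by more than one item, so I would apply envy-cycle elimination on the first-layer items. This means rotating single items along envy cycles, which never creates conflicts because each first-layer bundle is a single item at that moment. Then let the source of the envy graph pick next, as in Lipton et al.

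The main obstacle is exactly this step: showing simultaneously that (a) the swapped-in item $h$ and the displaced hub can always be placed feasibly, using the bound $\Delta\leq 2n/3$ through the inequality $|S|+|T|\leq n+\Delta-(n-\Delta)$ type counting, and (b) the single-item rotations keep EF1 with respect to the second-layer items already matched. I would first try to prove (a) by applying Proposition~\ref{Hall-Cor} to the residual feasibility graph after all hubs are moved into the second layer. In that graph each remaining agent has at most $\Delta$ conflicts and each item has at least $n-\Delta$ feasible agents, so the condition ``agent degree $\geq d$, item degree $\geq n'-d$'' must be verified with $n'$ the number of remaining second-layer items; this is where $\Delta\leq 2n/3$ is needed.
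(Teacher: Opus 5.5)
Your setup agrees with the paper: one round of Round-Robin, the easy case when $M_2$ can be saturated, and the observation that every agent of $T=N\setminus N_F(S)$ holds an item adjacent to all of $S$, which gives $|S|\le\Delta$. Your hubs are exactly the paper's set $M_0$.

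\textbf{The gap is the repair step.} All of $M_2$ was still available at $i$'s turn, so $v_i(h)\le v_i(g_i)$ for every $h\in S$. Hence the one-for-one trade you impose on $i\in T$ can only hurt $i$, and it also puts $g_i$ into the second layer. That destroys the invariant on which the Round-Robin EF1 argument rests: every agent values each second-layer item at most as much as her own bundle.

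\begin{itemize}
\item Suppose $g_i$ is then matched to $j$. Any agent $k$ who picked after $i$ and $j$, with $v_k(g_i),v_k(g_j)>v_k(A_k)$, envies $\{g_j,g_i\}$ after removing either item.
\item Agent $i$ itself, now holding $h$ plus possibly one more low-valued item, can fail EF1 towards the same bundle.
\item Envy-cycle elimination does not fix this, because the defect is an over-valued item sitting in the second layer, not a cycle. ``Let a source pick'' is exactly what the conflict graph forbids, since no unenvied agent need be compatible with $g_i$.
\item Once any envy-resolution step is added, the item returning to the second layer is in general no longer the hub. So your potential ``each swap removes a hub and lowers the deficiency'', which is only asserted anyway, is lost as well.
\end{itemize}

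\textbf{The missing idea is that exchanges must be voluntary, which forces them to be one-for-two.} Under the invariant, nobody strictly prefers a single item of $M_2$ to her bundle, but she may prefer a pair. The paper proceeds as follows.
\begin{itemize}
\item It takes a maximal Hall violator of maximum deficiency and builds a feasible 2-matching w.r.t.\ $S$ (Procedure~\ref{procedure 2-matching}). Each of the $k$ excess items of $S$ is attached to an item of $S$ matched, inside the sink component $B_s$, to an agent. The two items must be non-adjacent and both compatible with that agent's first item.
\item Such an attachment exists if $k\le n-\Delta-n_p-r$, and this is exactly where $\Delta\le 2n/3$ is used (Lemma~\ref{claim1}).
\item An agent takes a pair only if she strictly prefers it. The vacated bundle then goes through Envy Swap (Procedure~\ref{envy-swap}), so whatever returns to $M_2$ is envied by nobody.
\item If nobody wants a pair, the 2-matching is allocated. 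The resulting three-item bundles $\{g_i\}\cup p$ are EF1 precisely because nobody preferred $p$.
\item Termination comes from agents' values never decreasing together with the case analysis C1--C4 (Claims~\ref{claim2}--\ref{claim4}), not from removing hubs.
\end{itemize}

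Finally, your fallback through Proposition~\ref{Hall-Cor} cannot cover the new range. With only the bounds ``each agent is compatible with at least $|M_2|-\Delta$ items, each item with at least $n-\Delta$ agents'', the proposition needs $|M_2|\ge 2\Delta$. When $|M_2|=n$ this means $\Delta\le n/2$, which is just Theorem~\ref{RR-Thm1}(ii). For $n/2<\Delta\le 2n/3$, degree counting alone does not suffice, and the structure of the violator and of $B_s$ has to be exploited.
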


\subsection{Prelims to Algorithm~\ref{RR-match-v2}: Relevant definitions and procedures}
Before presenting the pseudocode of Algorithm~\ref{RR-match-v2}, we provide some necessary concepts.

In a bipartite graph $G = (X\cup Y,E)$, if no matching exists that saturates all vertices of $X$, then Hall's condition of Theorem \ref{Hall-Thm} is violated by at least one subset $S$ and its neighborhood $N(S)$. A Hall violator of maximum deficiency is a Hall violator $(S,N(S))$, where $S\subseteq X$, that maximizes $|S|-|N(S)|$. 
A Hall violator of maximum deficiency $(S,N(S))$ is maximal if for every Hall violator $(S^{\prime}, N(S^{\prime}))$ where $S\subset S^{\prime}$ it holds that $|S|-|N(S)|>|S^{\prime}|-|N(S^{\prime})|$. 
This implies that if $(S,N(S))$ is a Hall violator of maximum deficiency that is also maximal, then $|X^{\prime}|< |N(X^{\prime})\cap (Y\setminus N(S))|$ for every $X^{\prime}\subseteq X\setminus S$, otherwise $S$ could be extended by $X^{\prime}$ and would not be maximal. 


We refer to a pair of items $p = \{g^1,g^2\}$ as feasible if $(g^1,g^2)$ is not an edge of the conflict graph $G$. Algorithm \ref{RR-match-v2} is based on constructing the following type of matchings.  

\begin{definition}
A feasible 2-matching $R$ in $F(A,U)$ w.r.t. $S$ where $S\subseteq U$, is a set of edges of $F(A,U)$ such that 
\begin{itemize}
    \item[(i)] each $g\in U$ is matched with exactly one agent.
    \item[(ii)] each agent $i$ is matched with at most 2 items of $U$.
    \item[(iii)] if $(i,g), (i,g^{\prime})\in R$ then $g,g^{\prime}\in S$ and $(g,g^{\prime})\notin E(G)$.
\end{itemize}
\end{definition}

The set of feasible pairs of a 2-matching $R$ is the set of pairs of items that matched to the same agent in $R$.

Another graph that we make use of is the \textit{envy graph} of an allocation $A$. This is a directed graph, denoted by $EG(A)$, where its vertex set is the set of agents and there is an arc $(i,j)$ from agent $i$ to $j$ whenever $v_i(A_j)>v_i(A_i)$.

Within our algorithm, we make use of Procedure~\ref{envy-swap} stated below, which receives as input a partial allocation $A$ and a bundle $A_f$ that does not belong to any agent. 
The agents have the option to swap their bundle with $A_f$, and when the procedure ends no agent envies the final remaining bundle, $A_f$. We also ensure that $EG(A)$ is acyclic at the end (due to the order of the swaps).

\begin{procedure}[tbh]
\caption{Envy Swap($A,A_f$)}
\label{envy-swap}
\begin{algorithmic}[1]

\Require $A = (A_1,\dots, A_n)$, $A_f$

\State Transform $EG(A)$ into DAG by Envy Cycle Elimination \cite{lipton2004approximately}.
\State Let $1,2,\ldots,n$ be a topological order of $EG(A)$.
\For{$i=n$ \textbf{downto} $1$}
    \If{$v_i(A_i) < v_i(A_f)$}
        \State Swap $A_i$ with $A_f$.
    \EndIf
\EndFor
\State \textbf{return} $A$, $A_f$

\end{algorithmic}
\end{procedure}



When each agent has 1 item in her bundle and a matching does not exist, the question is how can we handle the items of the Hall violator $S$.
Algorithm~\ref{RR-match-v2} creates pairs of items of $S$ in a feasible manner and each agent has the choice to exchange her bundle with one of these pairs.
This is repeated until either the unallocated items can be matched to the agents with a single item so far or no agent is willing to exchange her bundle with such a pair.
In the latter case some agents will end up with 3 items in their bundle and others with 1.

\begin{algorithm}[!ht]
\caption{}
\label{RR-match-v2}
\begin{algorithmic}[1] 

\Require $(N,M,V,G)$

\State Set $A_i=\emptyset$ for every agent $i$.
\State Run 1 round of Round-Robin so that each agent gets 1 item.
\State Let $M_2$ be the set of unallocated items. Construct $F(A,M_2)$. \label{line:construction}

\If{a maximum size matching $R$ in $F(A,M_2)$ saturating every vertex in $M_2$ exists} \label{line:matching}
    \State Allocate($A,R$)
\Else
    \State Set $A^{\prime}=A$, $N^{\prime}=N$. 
    \State Find a maximal Hall violator of maximum deficiency $(S,N(S))$ where $S\subseteq M_2$ in $F(A^{\prime},M_2)$. \label{line: Hall-violator}
    \State $c = \text{true}$
    
    \While{$c = \text{true}$} \label{while3-start}
        \State Compute a feasible 2-matching $R^{\prime}$ w.r.t. $S$ in $F(A^{\prime},M_2)$. 
        \State Let $P$ be the set of feasible pairs of $R^{\prime}$. \label{line:2-matching}
        
        \If{$v_h(p)>v_h(A_h)$ for some feasible pair $p\in P$ and some agent $h\in N$}. \label{line: swap h}
            \State Set $A_f=A_h$. \State $A_h=\{p_h\}$ where $p_h\in P$ and $p_h=\arg\max_{p\in P} v_h(p)$. \label{line: exchange}
            \State Envy Swap$(A,A_f)$. The new bundle $A_f$   remains unallocated. \label{line: envy-swap}
            \State Let $N^{\prime}=\{i\in N\mid |A_i|=1\}$ and $A^{\prime}$ the respective partial allocation for the agents of $N^{\prime}$. 
            \State Let $M_2$ be the set of unallocated items. Construct $F(A^{\prime},M_2)$. \label{line: reconstruct}
            
            \If{a maximum size matching $R$ in $F(A^{\prime},M_2)$ saturating every vertex in $M_2$ exists} \label{line: match}
                \State Allocate($A,R$)
                \State $c = \text{false}$ \label{line:21}
            \Else 
                \State Compute a maximal Hall violator of maximum deficiency $(S,N(S))$ in $F(A^{\prime},M_2)$. \label{line: decide S} 
            \EndIf
        \Else 
            \State Allocate($A,R^{\prime})$ \label{line:allocate2}
            \State $c = \text{false}$
        \EndIf
    \EndWhile \label{while3-end}
\EndIf

\State \textbf{return} $A$

\end{algorithmic}
\end{algorithm}

   \subsection{Auxiliary graph and computing a feasible 2-matching}
    \label{app-subsec:matching}

    Before proving that Algorithm~\ref{RR-match-v2} returns an EF1 allocation, we describe how to identify a maximal Hall violator of maximum deficiency and we provide a procedure for the construction of the feasible $2$-matching in line~\ref{line:2-matching}. Both rely on the notion of an auxiliary graph, as defined below.
\begin{definition}
\label{def:auxiliary}
Given a set of edges $R$ in $F(A^{\prime},M_2)$, the auxiliary graph $F_R(A^{\prime},M_2)$ is the directed graph obtained by orienting the edges of $F(A^{\prime},M_2)$  such that every edge $e = \{i, g\}$ with $i \in N'$ and $g \in M_2$  is directed from $i$ to $g$ if $e \in R$, and from $g$ to $i$ if $e \notin R$.    
\end{definition}    
Note that the unmatched vertices in $R$ are single-vertices strongly connected components in the auxiliary graph, since either they may only have ingoing edges (if they belong to $N^{\prime}$) or only outgoing (if they belong to $M_2$).
These strongly connected components are regarded as trivial.
The non-trivial strongly connected components are those that are induced by the matched vertices.

The structural existence of the maximal Hall violator of maximum deficiency follows from the classic Dulmage-Mendelsohn decomposition \cite{Dulmage_Mendelsohn_1958}, and it can be efficiently computed using standard alternating path reachability \cite{lovasz1986matching}.
For the sake of completeness, we provide a description of how the maximal Hall violator of maximum deficiency $(S,N(S))$ is decided. Firstly we construct $F_R(A^{\prime},M_2)$ given a maximum size matching $R$ in $F(A^{\prime},M_2)$.
Let $T$ be the set that contains every $g\in M_2$ such that there exists an $i\in N^{\prime}$ which is unmatched in $R$ and $i$ is reachable from $g$ in $F_R(A^{\prime},M_2)$.
A maximal Hall violator of maximum deficiency $(S,N(S))$ is the one where $S=M_2\setminus T$ .

The DAG of the strongly connected components of a directed graph is formed by contracting each strongly connected component into a single vertex, preserving only the edges that connect different components.

Now we present Procedure~\ref{procedure 2-matching} that decides a feasible 2-matching $R^{\prime}$ w.r.t. $S$ in $F(A^{\prime},M_2)$. 
As part of the proof of  Theorem~\ref{Thm EF1} that follows in Section \ref{subsec:proofsketch}, it is shown in Lemma~\ref{claim1} that this procedure yields a feasible $2$-matching at each iteration of line~\ref{line: Hall-violator} in Algorithm~\ref{RR-match-v2}.

\begin{procedure}
\caption{Computing a feasible 2-matching $R^{\prime}$ w.r.t. $S$ in $F(A^{\prime},M_2)$.}
\label{procedure 2-matching}
\begin{algorithmic}[1]

\Require $F(A^{\prime},M_2)$, a maximum size matching $R$, a maximal Hall violator of maximum deficiency $(S,N(S))$. 

\State Construct the auxiliary graph $F_R(A^{\prime},M_2)$, according to Definition \ref{def:auxiliary}.
\State Compute its strongly connected components. \label{prod-auxiliary}
\State Let $B_s$ be the subgraph induced by the vertices of $S$ and $N(S)$ belonging in a non-trivial strongly connected component which is a sink in the DAG of strongly connected components of $F_R(A^{\prime},M_2)$. \label{prod-sink}
\State Let $R_s$ be the subset of $R$ belonging to $B_s$. \label{prod-edges in sink} 
\State Let $S_{un}$ be the vertices of $S$ that are unmatched in $R$. \label{prod-unmatched}
\State For every agent $i\in N^{\prime}$, let $g_i$ be the single item belonging to her bundle $A_i$.
\State Construct a bipartite graph $G_2=(S_{un}\cup R_s, E_2)$ that has one vertex $g^{\prime}$ for each $g^{\prime}\in S_{un}$, one vertex $(i,g)$ for each $(i,g)\in R_s$ and $g^{\prime}$ is connected with $(i,g)$ if and only if $g^{\prime}$ is neighbor to neither $g$ nor $g_i$ in the conflict graph $G$ i.e. $(g,g^{\prime}),(g^{\prime},g_i)\notin E(G)$. \label{prod-construct G2}
\State Find a matching $R_2$ in $G_2$ saturating $S_{un}$. 
\State Set $R^{\prime}=R$.
\For{each $g^{\prime}$ matched with $(i,g)$ in $R_2$}
    \State $R^{\prime}=R^{\prime}\cup (i,g^{\prime})$
\EndFor

\State \textbf{return} $R^{\prime}$

\end{algorithmic}
\end{procedure}

\subsection{Proof sketch of Theorem~\ref{Thm EF1}}
\label{subsec:proofsketch}

    We provide here the main skeleton of our proof. All missing proofs of subsequent lemmas and claims that we state and use here can be found in the Appendix. 
    
    Because of Theorem~\ref{RR-Thm1}, we assume that $\Delta>\frac{n}{2}$ and $m>2n-\Delta$.
    If a matching $R$ in line~\ref{line:matching} exists, Algorithm~\ref{RR-match-v2} does not enter the while loop in lines~\ref{while3-start}-\ref{while3-end}. The returned allocation $A$ is EF1. 
    The rest of this part of the proof is as in Theorem~\ref{RR-Thm1}.
    Otherwise, there is a Hall violator $(S,N(S))$. 
    
    Since the maximum degree of $G$ is $\Delta$, every vertex of $F(A,M_2)$ has degree at least $n-\Delta$.
    For the Hall violator when Algorithm~\ref{RR-match-v2} enters the while loop in lines~\ref{while3-start}-\ref{while3-end} for the first time, it holds that $n-\Delta\leq |N(S)|<|S|\leq \Delta$.

    In every iteration of the while loop in lines~\ref{while3-start}-\ref{while3-end}, we consider the following four cases for the bundle $A_f$ that is left unallocated by Envy Swap in line~\ref{line: envy-swap}. 
    We denote by $M_s$ the set of items belonging to the agents of $N(S)$ and by $M_0$ the items belonging to the agents of $N^{\prime}\setminus N(S)$.
    These cases are distinguished based on which agent held the bundle before Envy Swap. The cases are the following.
    \begin{itemize}
        \item[C1:] $A_f=\{g\}$, where $g\in M_s$ and $g$ is not feasible with any item of $M_0$.
        \item[C2:]  $A_f=\{g\}$, where $g \in M_s$ and $g$ is feasible with some item of $M_0$.
        \item[C3:] $A_f=\{g\}$, where $g\in M_0$.
        \item[C4:] $A_f=\{p_i\}$, where $p_i$ is a feasible pair belonging to an agent $i\in N\setminus N^{\prime}$.
     \end{itemize}

    Observe that in the first three cases the number of agents that are allocated a feasible pair is increased by $1$ after the iteration.
    Moreover, in the first iteration $A_f$ cannot fall into Case C4 since no agent is allocated a feasible pair so far.
    
    To proceed with the proof, we first show that Algorithm~\ref{RR-match-v2} is well-defined and always terminates. 
    This is established by Lemma \ref{claim1} and Claims \ref{claimcombined}-\ref{claim4}. 
    Lemma~\ref{claim1} establishes the existence of the 2-matching in each iteration.

    \begin{restatable}{lemma}{firstclaim}
    \label{claim1}
       A feasible 2-matching $R^{\prime}$ w.r.t. $S$ in line~\ref{line:2-matching} in each iteration of the while loop in lines~\ref{while3-start}-\ref{while3-end} exists and can be computed in polynomial time.
    \end{restatable}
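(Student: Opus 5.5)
The plan is to show that Procedure~\ref{procedure 2-matching} always succeeds, i.e.\ that the bipartite graph $G_2$ of line~\ref{prod-construct G2} has a matching saturating $S_{un}$. If it does, the output $R'$ is automatically a feasible 2-matching w.r.t.\ $S$, by the following observations.
\begin{itemize}
\item[(a)] Every item of $M_2\setminus S$ is matched in $R$. This is the maximality of the Hall violator: $|X'|<|N(X')\cap(N'\setminus N(S))|$ for all $X'\subseteq M_2\setminus S$, so Hall's theorem (Theorem~\ref{Hall-Thm}) matches $M_2\setminus S$ into $N'\setminus N(S)$.
\item[(b)] $R$ restricted to $S$ saturates $N(S)$. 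This is maximum deficiency, and it gives $|S_{un}|=|S|-|N(S)|$.
\item[(c)] Each newly added edge $(i,g')$ joins an agent $i\in N(S)$ already matched to some $g\in S$. Moreover $g'$ conflicts with neither $g$ nor $g_i$, so $(i,g')$ is an edge of $F(A',M_2)$, $\{g,g'\}$ is a feasible pair inside $S$, and agent $i$ receives at most two items.
\end{itemize}
Polynomial time is then immediate: maximum matching, alternating-path reachability for $T$ and $S=M_2\setminus T$, strongly connected components, and one more bipartite matching are all polynomial.

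The core step is Hall's condition for $S_{un}$ in $G_2$, and I would obtain it by a degree count in the spirit of Proposition~\ref{Hall-Cor}. Fix $g'\in S_{un}$. Since $g'\in S$, its neighbourhood in $F(A',M_2)$ lies inside $N(S)$, so $g'$ conflicts in $G$ with $g_i$ for every $i\in N'\setminus N(S)$. This already uses at least $|N'|-|N(S)|$ of its at most $\Delta$ conflicts. The vertices of $R_s$ are pairs $(i,g)$ with distinct agents and distinct items. Hence each remaining conflict of $g'$ kills at most one vertex of $R_s$ through $g_i$ and at most one through $g$. This gives
\[
\deg_{G_2}(g')\;\geq\; |R_s|-2\bigl(\Delta-(|N'|-|N(S)|)\bigr),
\]
and I would aim to show this is at least $|S_{un}|=|S|-|N(S)|$. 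That suffices, since any nonempty $X\subseteq S_{un}$ then has $|N_{G_2}(X)|\geq |S_{un}|\geq |X|$.

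To lower-bound $|R_s|$, I would use the sink component. In $F_R(A',M_2)$ the unmatched items of $S$ have only outgoing arcs, and these point into $N(S)$. A sink non-trivial component inside $S\cup N(S)$ is closed under alternating reachability, so every agent in it has all its $F$-neighbours inside it. Each such agent has at least $n-\Delta$ neighbours, which forces $|R_s|\geq n-\Delta$. I would also try to show that every $g'\in S_{un}$ can reach this sink, so that restricting to $B_s$ loses nothing. Then I would combine the resulting inequality with $\Delta\le 2n/3$, $|M_2|\le n$ (from $m\le 2n$), $|S|\le\Delta$, and $|N(S)|\ge n-\Delta$.

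I expect this counting to be the main obstacle, in particular controlling $|R_s|$ against the deficiency in later iterations of the while loop, where $N'$ has shrunk. If the crude factor-2 bound fails, my fallback is to separate the two kinds of conflicts. Conflicts with the items $g\in S$ are counted inside $G[S]$, and conflicts with the items $g_i$ are counted on the agent side. I would then apply Hall's condition to subsets $X\subseteq S_{un}$ directly, and bound $|X|$ by $|S|-|N(S)|\le 2\Delta-n\le n/3$, using the invariants (established alongside Claims~\ref{claimcombined}--\ref{claim4}) that every iteration maintains $m\le 2n$ and that $A_f$ has not yet been reused.
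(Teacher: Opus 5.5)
\textbf{Verdict: there is a genuine gap.} Your overall plan matches the paper's: show that every $g'\in S_{un}$ has at least $|S_{un}|$ neighbours in $G_2$, with $|R_s|$ lower-bounded through the sink component. But the count you propose does not close.

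\emph{Double counting.} You bound $\deg_{G_2}(g')\ge |R_s|-2r$, where $r=\Delta-(|N'|-|N(S)|)$ bounds the conflicts of $g'$ outside $M_0$. This counts each conflict twice. A conflict of $g'$ is an edge to a single item $x$, and $x$ is either the held item $g_i$ of exactly one agent or an unallocated item of $S$, never both. The vertices of $R_s$ use distinct agents and distinct items, so each such conflict eliminates at most one vertex of $R_s$. Hence $\deg_{G_2}(g')\ge |R_s|-r$. With your factor $2$, the target inequality becomes $r\le 2n-3\Delta$, which fails at $\Delta=2n/3$ whenever $r\ge 1$. For example, at $n=6$, $\Delta=4$, $|S|=4$, $|N(S)|=3$, your lower bound is $0$ while $|S_{un}|=1$.

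\emph{Sink argument reversed.} In $F_R(A',M_2)$ an agent's only out-arc is its matching edge, and its other feasible items point \emph{into} it. So an agent in a sink need not have its $F$-neighbourhood inside the sink. It is the items of $S\cap B_s$ whose feasible agents all lie in $B_s$. This gives $|R_s|\ge n-\Delta-n_p$, not $n-\Delta$ once pairs exist.

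\emph{Missing invariant for later iterations.} This is the substantive missing idea: the right bound on the deficiency $k=|S|-|N(S)|$ when $n_p>0$. The inputs you plan to use ($|S|\le\Delta$, $|N(S)|\ge n-\Delta$) are first-iteration facts. For $n_p>0$ you only have $|N(S)|\ge n-\Delta-n_p$. Together with $|S|\le\Delta$ this gives only $k\le 2\Delta-n+n_p-r$, and that is at most $n-\Delta-n_p-r$ only when $3\Delta+2n_p\le 2n$.

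The paper uses an inductive invariant, obtained from Claims~\ref{claim2} and~\ref{claim3} in earlier iterations: every item of $M_0$ conflicts with all items of $S$ \emph{and} with all $2n_p$ paired items of $M_p$. Therefore $|S|\le\Delta-2n_p$, and so $k\le 2\Delta-n-n_p-r$. Compared with $\deg_{G_2}(g')\ge n-\Delta-n_p-r$, the $r$ and $n_p$ terms cancel, and the requirement is exactly $2n-3\Delta\ge 0$.

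Your fallback does not supply this. It bounds $k\le 2\Delta-n\le n/3$, citing ``$m\le 2n$ and $A_f$ not reused''. It neither identifies the invariant nor keeps the $-r$ term. Without that term, comparing $k$ against $|R_s|-r$ can fail.

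Finally, trying to show that each $g'$ reaches $B_s$ is unnecessary. The degree bound holds for every $g'\in S_{un}$ regardless of reachability.
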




Claim~\ref{claimcombined} esteblishes some structural properties of the auxiliary graph $F(A^{\prime},M_2)$. Recall that $B_s$ is the strongly connected component which is non-trivial sink in the DAG formed by the strongly connected components of $F_R(A^{\prime},M_2)$.

\begin{restatable}{claim}{claimcombined}
    
\label{claimcombined}
The following hold:
\begin{enumerate}
    \item[(i)]  Every $g\in M_0$ is feasible with at least one other item of $M_0$.
    \item[(ii)]  Every $g\in S$ is feasible with at least one other item of $M_s\cap B_s$.
\end{enumerate}
\end{restatable}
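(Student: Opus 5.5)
Both parts come from counting degrees, using the structure of the maximal Hall violator of maximum deficiency in $F(A^{\prime},M_2)$. At the relevant moment every agent of $N^{\prime}$ holds exactly one item, so an item $g\in M_2$ is non-adjacent in $F(A^{\prime},M_2)$ to agent $i$ exactly when $g$ conflicts with $g_i$. The neighbourhood of $g$ in $F(A^{\prime},M_2)$ is therefore the set of agents whose single item is \emph{not} a $G$-neighbour of $g$. Two facts drive the argument. First, each item has at most $\Delta$ conflicts, so it is feasible with all but at most $\Delta$ of the other items. Second, $|S|\leq\Delta$, $|N(S)|\geq n-\Delta$ and $|N(S)|<|S|$; here I will use $\Delta\leq\frac{2n}{3}$ and the count $m\leq 2n$, hence $|M_2|\leq n$ in the first iteration, and I will take the analogous counts for later iterations as given, where they are needed for (i).

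For (i), take $g\in M_0$ held by agent $i\in N^{\prime}\setminus N(S)$. I argue by contradiction: suppose $g$ conflicts with every other item of $M_0$, so it has at least $|M_0|-1$ conflicts inside $M_0$. Then $|M_0|-1\leq\Delta$. It remains to show that $|M_0|\geq\Delta+2$, or that $g$'s remaining conflict budget is forced elsewhere. Since $i\notin N(S)$, every $s\in S$ conflicts with $g$, so $g$ has at least $|S|$ conflicts in $S$. Adding these, $g$ has at least $|M_0|-1+|S|$ conflicts, so $|M_0|-1+|S|\leq\Delta$. Now $|M_0|=|N^{\prime}|-|N(S)|$. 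In the first iteration $|N^{\prime}|=n$, and $|S|\geq|N(S)|+1\geq n-\Delta+1$. This gives $n-|N(S)|-1+|N(S)|+1\leq\Delta$, i.e.\ $n\leq\Delta$, a contradiction. For later iterations I will redo this count with $|N^{\prime}|$ and the current $|S|$, using $|S|-|N(S)|\geq 1$ and the fact that the agents removed from $N^{\prime}$ hold pairs. Showing that the count $|N^{\prime}|\geq\Delta+1$ survives in later iterations is where $\Delta\leq\frac{2n}{3}$ must enter, and I expect it to be delicate.

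For (ii), take $g\in S$. Its neighbours in $F(A^{\prime},M_2)$ all lie in $N(S)$, so $g$ conflicts with every item of $M_0$ and with $M_s$-items outside its neighbourhood. I need some $i\in N(S)\cap B_s$ with $g_i$ not conflicting with $g$, i.e.\ an edge from $g$ to an agent of $B_s$.
\begin{itemize}
\item I first use the structure of $B_s$. It is a sink non-trivial strongly connected component inside the Hall-violator side, and by maximality plus maximum deficiency every subset outside $S$ has strictly more neighbours outside $N(S)$. This should force every $g\in S$ to reach $B_s$: unmatched items of $S$ have out-edges to agents in $N(S)$, alternating paths stay inside $S\cup N(S)$, and the sink property pins them into $B_s$.
\item Failing that, I count directly. $g$ has at least $n-\Delta$ feasible agents, all in $N(S)$, and at most $\Delta-|M_0|$ conflicts can land in $M_s$. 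Components of $N(S)$ other than $B_s$ are few, since $|N(S)|<|S|\leq\Delta$ and $N(S)\setminus B_s$ is small when $\Delta\leq\frac{2n}{3}$. The pigeonhole principle then gives a feasible agent in $B_s$.
\end{itemize}
The main obstacle is the structural argument that every item of $S$ has an auxiliary-graph edge into $B_s$. If the sink argument only works for some items of $S$, I will fall back on the degree counting, using $|M_0|=|N^{\prime}|-|N(S)|\geq n-2\Delta+\ldots$.
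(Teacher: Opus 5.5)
Your first-iteration count for (i) is correct, and it needs only $\Delta<n$. But later iterations are left open, and the route you sketch cannot close them without an ingredient you never state. For $n_p>0$ your count gives only $|M_0|-1+|S|\ge |N^{\prime}|=n-n_p$. The facts you list ($n-\Delta-n_p\le|N(S)|<|S|\le\Delta$) bound $n_p$ only from below, so $n-n_p>\Delta$ is not available. The mere fact that removed agents hold pairs does not help either.

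The missing idea, which the paper carries through the induction, is that every item of $M_0$ conflicts with every one of the $2n_p$ items $M_p$ held in pairs. Each pair was taken from the Hall-violator set $S$ of some earlier iteration. The item set of $N^{\prime}\setminus N(S)$ only shrinks between iterations (the subset claim proved alongside this one). So the current $M_0$ lies inside the $M_0$ of that earlier iteration, whose items conflict with all of that iteration's $S$. With this invariant, $g$ has at least $|S|+2n_p+(|M_0|-1)\ge n+n_p>\Delta$ conflicts, a contradiction. The paper phrases the same count as the residual budget $\Delta-|S|-2n_p$ being smaller than $|M_0\setminus\{g\}|$. Your guess that $|N^{\prime}|\ge\Delta+1$ survives via $\Delta\le\frac{2n}{3}$ is in fact true, but only through $|S|\le\Delta-2n_p$, that is, through this same invariant.

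For (ii), your first route fails. Reaching $B_s$ along an alternating path is not the same as adjacency to an agent of $B_s$, which is what (ii) asserts. Moreover, $S\cup N(S)$ may contain several non-trivial sink components, and an item of $S$ lying in another one cannot reach $B_s$ at all. Ruling this out is exactly what counting with $\Delta\le\frac{2n}{3}$ does.

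Your second route is the paper's, but it lacks its key input: the \emph{lower} bound $|N(S)\cap B_s|\ge n-\Delta-n_p$. Without it there is no reason for $N(S)\setminus B_s$ to be small. The bound holds because an item $g^{\prime}\in S\cap B_s$ has a single in-edge in $F_R(A^{\prime},M_2)$, from its matched agent. That agent must lie in $B_s$, since the component is non-trivial. All other edges at $g^{\prime}$ are out-edges and cannot leave a sink, so all its at least $n-\Delta-n_p$ feasible agents are in $B_s$.

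Now suppose $g\in S$ had no feasible agent in $B_s$. Then its own at least $n-\Delta-n_p$ feasible agents would lie in $N(S)\setminus B_s$, giving $|N(S)|\ge 2(n-\Delta-n_p)$. Against $|N(S)|<\Delta-2n_p$ this forces $2n<3\Delta$, a contradiction. Note that the bound $|N(S)|<|S|\le\Delta$ you cite suffices only when $n_p=0$; for $n_p>0$ you again need $|S|\le\Delta-2n_p$ from the invariant above.
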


The specific construction of the 2-matching, combined with the structure of the feasibility graph implied by the condition that $\Delta \leq \frac{2n}{3}$, ensures the validity of the claims stated below. 
Claim~\ref{claim2} specifies the conditions under which Algorithm~\ref{RR-match-v2} exits the while loop in lines~\ref{while3-start}-\ref{while3-end} and terminates.

\begin{restatable}{claim}{fourthclaim}
\label{claim2}
    Algorithm~\ref{RR-match-v2} exits the while loop in lines~\ref{while3-start}-\ref{while3-end} if one of the following holds.
    \begin{itemize}
        \item[(i)] $v_h(p)\leq v_h(A_h)$ for every feasible pair $p\in P$ and any agent $h\in N$ in line~\ref{line: swap h}. 
        \item[(ii)] $|S|-|N(S)|=1$ and the bundle $A_f$ returned by Envy Swap falls in Case C2.
        \item[(iii)] $|S|-|N(S)|=1$ and the bundle $A_f$ returned by Envy Swap falls in Case C3.
         \item[(iv)] $|S|-|N(S)|=2$ and $A_f$ returned by Envy Swap falls in Case C3.
    \end{itemize}
\end{restatable}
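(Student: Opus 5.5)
Case (i) is immediate from the pseudocode: if the test in line~\ref{line: swap h} fails, the algorithm runs Allocate$(A,R^{\prime})$ in line~\ref{line:allocate2} and sets $c=\text{false}$. For (ii)--(iv) the test succeeds and the loop body runs. I will show that after Envy Swap the new feasibility graph $F(A^{\prime},M_2)$ built in line~\ref{line: reconstruct} satisfies Hall's condition on the item side. Theorem~\ref{Hall-Thm} then gives a matching saturating $M_2$ in line~\ref{line: match}, and $c$ becomes false in line~\ref{line:21}.

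First I describe how one iteration changes the bipartite graph, using the fact that feasibility of an item with an agent depends only on the agent's bundle. Let $d=|S|-|N(S)|$ be the deficiency of the violator used in the iteration. The pair $p_h$ consists of items of $S$: by Procedure~\ref{procedure 2-matching}, one item matched in $R$ to an agent of $N(S)$ and one item of $S_{un}$. So the new set of unallocated items is $M_2\setminus p_h$, together with the item $g$ of $A_f$, since in Cases C2 and C3 we have $A_f=\{g\}$. Envy Swap only permutes bundles. Hence the multiset of singleton bundles held by agents of the new $N^{\prime}$ equals the old multiset with $\{g\}$ removed. In Case C2 the removed bundle lies in $M_s$, so the agent side of $N(S)$ shrinks by one. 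In Case C3 it lies in $M_0$, so one agent of $N^{\prime}\setminus N(S)$ disappears.

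Next I would check Hall's condition. Take any set $T$ of new unallocated items and write $T=T_S\cup T_o\cup T_g$, where $T_S\subseteq S\setminus p_h$, $T_o\subseteq M_2\setminus S$, and $T_g\subseteq\{g\}$. For $T_S$, maximum deficiency of $(S,N(S))$ means every subset of $S$ has deficiency at most $d$. So $|T_S|-|N(T_S)|\le d-2$ after deleting the two pair items, up to the loss of at most one neighbor from $N(S)$. In C2 this loss can occur, which gives deficiency $\le d-1\le 0$ when $d=1$. In C3 no neighbor in $N(S)$ is lost, which gives $\le d-2\le 0$ when $d\le 2$; this covers (iii) and (iv). For $T_o$ I use maximality of the violator: $|X^{\prime}|<|N(X^{\prime})\cap(N^{\prime}\setminus N(S))|$ for every nonempty $X^{\prime}\subseteq M_2\setminus S$. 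This strict surplus of at least one outside $N(S)$ absorbs the item $g$ in C2. By the definition of C2, $g$ is feasible with some bundle in $M_0$, so $g$ has a neighbor outside $N(S)$. In C3 the same surplus absorbs the deleted outside agent. The item $g$ is then covered by Claim~\ref{claimcombined}(i): it is feasible with another item of $M_0$, whose bundle is still present. Adding the bounds for $T_S$ and for $T_o\cup T_g$, whose neighborhoods outside $N(S)$ are disjoint from $N(S)$, should give $|N(T)|\ge|T|$.

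The main obstacle is making this additivity rigorous in the mixed cases. Three situations need care: $T_o=\emptyset$ but $g\in T$; in C3 with $d=2$, where nothing is left over; and the possibility that $g$'s only feasible bundles already lie in $N(T_S)$. I would first try to argue via the Dulmage--Mendelsohn structure and the auxiliary graph $F_R(A^{\prime},M_2)$. The idea is to extend $R$ minus the edge covering $p_h$ by an augmenting path from $g$ to an unmatched outside agent, using that unmatched agents are reachable from every item of $M_2\setminus S$. If that path approach stalls, I would fall back on explicit case-by-case counting with König's theorem on the new graph.
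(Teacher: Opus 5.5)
Your handling of (i) is fine, but for (ii)--(iv) the proposal stops exactly where the proof has to happen, and the two bounds you plan to add are not valid as stated. On the $S$-side, maximum deficiency gives only $|T_S|-|N(T_S)|\le d$ for $T_S\subseteq S$. Deleting the two items of $p_h$ does not change the neighbourhood of any $T_S$ that avoids them, so you cannot conclude $d-2$ (or $d-1$ in C2). What makes $S\setminus p_h$ matchable is the particular shape of the pairs built by Procedure~\ref{procedure 2-matching}: $p_h=\{g_1,g'\}$, where $g'$ is unmatched in $R$ and $g_1$ is matched in $R$ to an agent $g_p$ of the sink component $B_s$. Removing $p_h$ therefore frees $g_p$. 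In C2 with $g\neq g_p$, the item $x$ that $R$ matched to $g$ is orphaned and must be carried to $g_p$ by an alternating path. This is where the paper uses Claim~\ref{claimcombined}(ii) and the strong connectivity of $B_s$. In (iv) you must also keep the second $R$-unmatched item $z_2$ on the agent $g_0$ that $R'$ assigns it together with $z_1$, and reroute $z_1$ to $g_p$ inside $B_s$. None of this follows from deficiency counting.

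On the outside, your maximality argument does work in C2, since no agent of $M_0$ disappears. In (iii), however, the additivity fails outright. With $d=1$ you have $|M_2\setminus S|=|M_0|-1$, so $(M_2\setminus S)\cup\{g\}$ contains $|M_0|$ items, while only the $|M_0|-1$ agents of $M_0\setminus\{g\}$ remain outside $N(S)$. Hall's condition for this set can hold only through an edge into $N(S)$, followed by an alternating path to the freed agent $g_p$. Your sketch never produces such an edge. It is available from a conflict count: write $|S|=n-\Delta-n_p+r+k$ and $|N(S)|=n-\Delta-n_p+r$. The item $g\in M_0$ conflicts with all of $S$ and $M_p$, which leaves at most $2\Delta-n-n_p-r-k<|M_s|$ further conflicts when $\Delta\le 2n/3$. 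Your fallback of an augmenting path from $g$ to an unmatched outside agent cannot suffice either. If $g$ was the unmatched agent $u$, no such agent remains; otherwise $g$ and its former partner both need agents while only $u$ is free outside.

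In (iv), Claim~\ref{claimcombined}(i) gives $g$ only one feasible item of $M_0$, and that item may already lie in $N(T_o)$. The paper instead proves a perfect matching between $(M_2\setminus S)\cup\{g\}$ and $M_0\setminus\{g\}$. Every item of $M_0\setminus\{g\}$ has at least $n-\Delta+n_p+1$ neighbours in $(M_2\setminus S)\cup\{g\}$. Hence a violator $X\ni g$ would force $g$ to conflict with at least $n-\Delta+n_p+2$ items, exceeding its residual budget $2\Delta-n-n_p-r-2$; this is where $\Delta\le 2n/3$ is used. A violator avoiding $g$ would contradict the maximality of $S$. The missing ingredients are therefore:
\begin{itemize}
\item the explicit augmenting paths of the paper, through $B_s$ to $g_p$ and from $M_2\setminus S$ to $u$, which uses that $u$ is reachable from every item of $M_2\setminus S$;
\item the degree and conflict counts that rely on $\Delta\le 2n/3$.
\end{itemize}
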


Claim~\ref{claim3} shows that the items belonging to the agents of $N^{\prime}\setminus N(S)$ is a subset of the respective items in the initial iteration.

    \begin{restatable}{claim}{fifthclaim}
    \label{claim3}
        When Algorithm~\ref{RR-match-v2} re-enters the while loop in lines~\ref{while3-start}-\ref{while3-end}, the set of items belonging to the agents of $N^{\prime}\setminus N(S)$ is a subset of the respective set in the previous iteration.
   \end{restatable}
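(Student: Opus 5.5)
We need to prove Claim~\ref{claim3}: when the algorithm re-enters the while loop, the items held by the agents of $N'\setminus N(S)$ form a subset of the corresponding set in the previous iteration. The plan is to track, through one iteration, where the bundles of $M_0$ (items of agents in $N'\setminus N(S)$) can move. Then we show that the new maximal Hall violator of maximum deficiency $(S_{\text{new}},N(S_{\text{new}}))$ satisfies two properties. First, its neighbourhood contains every agent of $N'_{\text{new}}$ holding an item outside $M_0$. Second, no item that newly sits with a single-item agent lands outside $N(S_{\text{new}})$ unless it already belonged to $M_0$.

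\textbf{Step 1: bundle movements.} During an iteration, a single agent $h$ trades her bundle for a feasible pair $p_h\in P$. The items of $p_h$ come from $S$ or from $R$-matched unallocated items, so they are not in $M_0$. Envy Swap$(A,A_f)$ then only permutes existing bundles together with $A_f$; it creates no new single-item bundles. Hence the single-item bundles after the iteration are exactly the old single-item bundles, minus the one that ends up as $A_f$, possibly plus an old bundle that was a pair (Case C4), which is not single. Therefore every single-item bundle of $N'_{\text{new}}$ is an old single-item bundle, i.e.\ an item of $M_s\cup M_0$, and the set $M_0$ can only lose the item in Case C3.

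\textbf{Step 2: single items of $M_s$ end up in $N(S_{\text{new}})$.} The unallocated set $M_2$ changes by removing the two items of $p_h$ and adding the items of $A_f$. We must show that each remaining single item $g\in M_s$ is still feasible for some item of $S_{\text{new}}$. For this we use the maximality of the old violator: every $X'\subseteq M_2\setminus S$ has strictly more neighbours in $N'\setminus N(S)$ than $|X'|$. Consequently the items of $M_2\setminus S$ remain matchable into agents holding $M_0$. When we recompute $T$ (items from which an unmatched agent is reachable), an unmatched agent in $N(S)$ cannot be reached from an item outside $S$, because items outside $S$ have no edges into $N(S)$ in $F(A',M_2)$. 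So the agents of $N(S)$ still lying in $N'_{\text{new}}$ stay in $N(S_{\text{new}})$.

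Concretely, I would argue that $S\setminus p_h\subseteq S_{\text{new}}$. The deficiency relation $|S\setminus p_h|$ versus $|N(S)\setminus\{\text{lost agent}\}|$ keeps $S\setminus p_h$ a violator, or at least keeps it non-reachable from unmatched agents. Maximality of maximum-deficiency violators, via the Dulmage--Mendelsohn uniqueness of the maximal one, then gives containment.

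\textbf{Step 3: conclusion.} Combining the two steps, the agents of $N'_{\text{new}}\setminus N(S_{\text{new}})$ hold only items from $M_0$. This is the subset property.

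The main obstacle is Step 2: proving $S\setminus p_h\subseteq S_{\text{new}}$ in all of Cases C1--C4, especially when $A_f$ adds new unallocated items (Case C4 adds two). I would handle it with the alternating-path characterisation of $T$. Take the maximum matching obtained by deleting the edges of $h$ and adjusting. Then show that any alternating path from an unmatched agent to an item of $S$ would have to pass through an edge from $N'\setminus N(S)$ to $S$, and no such edge exists.
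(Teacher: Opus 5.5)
Your Step 1 is fine, but it only shows that the single items held in the new $N'$ come from $M_s\cup M_0$. The whole content of the claim is in Step 2, and the structural fact you rely on there is false. By definition of $N(S)$, items of $S$ have no neighbours in $N'\setminus N(S)$. Nothing, however, prevents an item of $M_2\setminus S$ from being feasible with a holder of an $M_s$ item; such edges merely lie in no maximum matching. So your alternating-path argument does not go through. In the new feasibility graph, a maximum matching may match an agent $a\in N(S)$ to an item outside $S$: an item of $M_2\setminus S$, or an item released in $A_f$. A path from $z\in S\setminus p_h$ can then step to $a$, follow $a$'s matching edge out of $S$, and continue through $M_0$ to an unmatched agent, without ever using an edge from $N'\setminus N(S)$ to $S$. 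Your fallback does not rescue this either. A Hall violator need not be contained in the maximal maximum-deficiency violator unless it has maximum deficiency. Moreover, $S\setminus p_h$ need not even be a violator. In Case C4 every item of $M_s$ is still a single-item bundle while $|S\setminus p_h|=|N(S)|+k-2$, so nothing forces positive deficiency for $k\le 2$; in Case C1 with $k=1$ the same count gives $0$. Finally, even granting $S\setminus p_h\subseteq S_{\mathrm{new}}$, you must still show something you only assert. The agent of $N(S)$ that was matched in $R$ to the $R_s$-item of $p_h$ must keep a neighbour in $S\setminus p_h$. This does follow from that agent lying in the non-trivial strongly connected sink $B_s$, but the argument has to be given.

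What your route actually needs is a maximum matching of the new graph in which every surviving agent of $N(S)$ is matched to an item all of whose neighbours lie in $N(S)$. Such items are those of $S\setminus p_h$, plus the released item in Case C1. In Case C4 they also include the released pair, but only through the inductive fact that items of $M_0$ conflict with every item of $M_p$. A K\"onig-cover argument shows such a matching is maximum. Then those items are unreachable from unmatched agents, which yields both $S\setminus p_h\subseteq S_{\mathrm{new}}$ and the fact that $N(S)$ stays covered. Establishing the required Hall condition needs the specific pair structure: one item from an $R_s$-edge and one unmatched item of $S_{un}$. It also needs the strong connectivity of $B_s$ and Claim~\ref{claimcombined}(ii), which relies on $\Delta\le 2n/3$. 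Your plan uses none of these, and never uses $\Delta\le 2n/3$ at all.

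The paper instead argues by contradiction, without building a new matching. Let $X$ be the old $N(S)$-items that end up in $M_0^*$, and $Y$ the old $S$-items that end up outside $S^*$; maximality of $S^*$ gives $|Y|<|X|$. Inside $S$, the items of $X$ are feasible only with $Y\cup\{z_1,z_2\}$. Maximum deficiency of $S$, combined with a case analysis on $z_1,z_2$ that uses the 2-matching construction, then gives a contradiction. Case C4 is closed by the count $|M_0|+|N(S)|\ge 2n-2\Delta+1>\Delta$.
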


 The number of agents that have been already allocated a feasible pair is denoted by $n_p$. i.e.,  $n_p=|N\setminus N^{\prime}|$. Clearly the number of items that have been allocated to the agents of  $N\setminus N^{\prime}$ is $2n_p$.
Claim~\ref{claim4} is obtained almost directly by Claim~\ref{claim3}.

    \begin{restatable}{claim}{sixthclaim}
    \label{claim4}
    If $|S|=\Delta-2n_p$ the bundle $A_f$ returned by Envy Swap cannot fall into Case C1.
    \end{restatable}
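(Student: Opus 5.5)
The plan is to argue by counting: if $A_f=\{g\}$ falls into Case C1 while $|S|=\Delta-2n_p$, then $g$ has too many conflicts, contradicting the maximum degree $\Delta$.

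\textbf{Step 1: fix the setting.} Assume $A_f=\{g\}$ with $g\in M_s$, so $g$ was held by an agent of $N(S)$ before Envy Swap. Case C1 means that $g$ is adjacent in $G$ to every item of $M_0$, where $M_0$ is the set of items held by the agents of $N'\setminus N(S)$.

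\textbf{Step 2: neighbours in $S$.} Since $g$ belonged to an agent of $N(S)$, the construction of $F(A',M_2)$ does not force $g$ to be adjacent to items of $S$. However, $S$ is a Hall violator in which every item of $S$ is infeasible with every agent of $N'\setminus N(S)$. So each $s\in S$ is adjacent to every item of $M_0$. This gives a complete bipartite structure between $S$ and $M_0$, but it concerns $S$, not $g$, so it does not directly bound the degree of $g$.

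\textbf{Step 3: the main count.} Instead, I would count the neighbours of $g$ inside $M_0$ directly. By Claim~\ref{claim3}, the items held by agents of $N'\setminus N(S)$ form a subset of the corresponding set in the first iteration. Each iteration in Cases C1--C3 removes agents from $N'$: an agent receiving a pair leaves $N'$, and two items of $S$ are consumed. Tracking this, I aim to show
\[
|N'\setminus N(S)| = |N'| - |N(S)| = (n-n_p) - |N(S)|.
\]
Combining $|N(S)|\le |S|-1=\Delta-2n_p-1$ with $|N(S)|\ge n-\Delta$, I expect to get $|M_0|\ge \Delta - n_p + 1$, or a similar bound that exceeds what the degree allows. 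The intended chain is
\[
|M_0|=n-n_p-|N(S)|\ge n-n_p-(\Delta-2n_p-1)=n-\Delta+n_p+1.
\]
Adding the neighbours of $g$ that lie among the items allocated as pairs should give $|N(g)|>\Delta$, which is a contradiction.

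\textbf{Main obstacle.} The hard part is getting an exact lower bound on $|N(g)|$ that strictly exceeds $\Delta$. The bound above only gives $n-\Delta+n_p+1$ neighbours of $g$ in $M_0$. I would therefore also use that $g$, being matched in $B_s$, is adjacent to the round-robin item of every agent in $N'\setminus N(S)$, and also to the items of the $n_p$ pairs. These pairs came from $S$, and each was adjacent to $M_0$ in the original violator.

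If this count falls short, the fallback is to use the equality $|S|=\Delta-2n_p$ more directly. It says the violator has shrunk by exactly two items per pair, so no item of $S$ was ever placed in $M_0$. Then Claim~\ref{claimcombined}(i) forces each item of $M_0$ to be feasible with another item of $M_0$, and the goal is to derive a contradiction from $g$ being infeasible with all of them via the maximality of the Hall violator.
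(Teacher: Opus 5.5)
\textbf{There is a genuine gap: you are counting the degree of the wrong vertex.} Case C1 only guarantees that $g$ conflicts with the items of $M_0$. That count can never reach $\Delta+1$. Since $|N(S)|\ge n-\Delta-n_p$, we have
\[
|M_0| = (n-n_p)-|N(S)| \le \Delta,
\]
so $M_0$ alone never supplies more than $\Delta$ neighbours of $g$. Your lower bound $|M_0|\ge n-\Delta+n_p+1$ is typically well below $\Delta$ in the relevant regime $n/2<\Delta\le 2n/3$.

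The extra neighbours you want to add are not available. Nothing forces $g\in M_s$ to conflict with the items of the $n_p$ pairs. Those pair items were drawn from an earlier $S$, so they conflict with every item of the earlier (and, by Claim~\ref{claim3}, larger) set $M_0$. That says nothing about their adjacency to $g$, which was held by an agent of $N(S)$. Likewise, $g$ need not conflict with items of $S$. Saying that $g$ conflicts with the round-robin items of $N'\setminus N(S)$ is just Case C1 restated. The fallback via Claim~\ref{claimcombined}(i) and maximality does not identify any concrete contradiction.

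\textbf{The fix uses exactly the two facts you already have, applied to an item $q\in M_0$ instead of to $g$.} Such a $q$ exists, since $|M_0|\ge n-\Delta+n_p+1\ge 1$. This $q$ is adjacent to:
\begin{itemize}
\item every item of $S$, by the Hall-violator structure you note in Step 2;
\item every item of $M_p$, which has $2n_p$ items, by Claim~\ref{claim3} together with your remark that the pairs came from earlier violators;
\item $g$, by the Case C1 hypothesis and symmetry of adjacency.
\end{itemize}
These three sets are pairwise disjoint: $S$ is unallocated, $M_p$ is held by $N\setminus N'$, and $g$ is held by an agent of $N(S)$. Hence
\[
|N(q)|\ge (\Delta-2n_p)+2n_p+1=\Delta+1,
\]
a contradiction. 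Put differently, when $|S|=\Delta-2n_p$, every item of $M_0$ has its whole neighbourhood used up by $S\cup M_p$, so it is feasible with every item of $M_s$. This is the paper's one-line argument.
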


The set of items belonging to the agents of $N^{\prime}\setminus N(S)$ are connected in the conflict graph with every item belonging to $S$ at some point during the execution of Algorithm~\ref{RR-match-v2}. Thus, Claim~\ref{claim3} implies that the overall number of items which are elements of $S$ at some iteration is at most $\Delta$.   
Since $\Delta\leq \frac{2n}{3}$, the number of feasible pairs that can be constructed is $O(n^2)$.
    Moreover, observe that the value of any agent does not decrease during Algorithm~\ref{RR-match-v2} and in every iteration of the while loop at least one agent strictly increases her value.
    Consequently, in combination with Claim~\ref{claim4}, after a polynomial number of iterations the deficiency $|S|-|N(S)|$ would strictly decrease if Algorithm~\ref{RR-match-v2} re-enters the while loop.
    Thus, because of Claim~\ref{claim2}, Algorithm~\ref{RR-match-v2} always terminates.

Having established the termination of the algorithm, we can now establish that it produces a desirable allocation. 
    \begin{restatable}{lemma}{lemmaEFONE}\label{lemEF1}
        Algorithm~\ref{RR-match-v2} returns an EF1 allocation, under the assumptions stated in Theorem \ref{Thm EF1}.
    \end{restatable}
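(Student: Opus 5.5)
I would prove Lemma~\ref{lemEF1} by maintaining an invariant through every iteration of the while loop in lines~\ref{while3-start}-\ref{while3-end}, and then checking that each of the two exits (line~\ref{line:21} and line~\ref{line:allocate2}) preserves EF1. The invariant is as follows. At the start of each iteration, every agent of $N^{\prime}$ holds a single item, every agent of $N\setminus N^{\prime}$ holds a feasible pair, the envy graph $EG(A)$ of the current partial allocation is acyclic, and no agent envies any unallocated single item of $M_2$ by more than what she holds, in the Round-Robin sense. More precisely, for every agent $i$ and every unallocated item $g$ we keep $v_i(A_i)\ge v_i(g)$. This is true initially, because the first Round-Robin round lets each agent pick a maximum-valued item among those still available, and agents later in the order only lose options. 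It stays true afterwards, because values never decrease (Envy Swap and the exchange in line~\ref{line: exchange} only make agents weakly better off) and the set of unallocated items only shrinks or gains a discarded bundle $A_f$. For $A_f$ itself, Procedure~\ref{envy-swap} guarantees that no agent envies it at the end.

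Next I would check EF1 among the bundles that exist during the loop. A singleton bundle is never an EF1 violation target, since removing its one item leaves the empty set. For a pair $p_h$ held by $h$ and another agent $i$, I would show $v_i(A_i)\ge \min_{g\in p_h} v_i(p_h\setminus\{g\})$. This follows from the invariant, because each item of $p_h$ was unallocated when $h$ took it, and $i$'s value is at least that of any such item. The subtle point is that the item may have been unallocated only after $i$'s current bundle was formed; I would handle this by using monotonicity of $v_i(A_i)$ over time.

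For termination through line~\ref{line:21}, the matching $R$ adds at most one item to each agent of $N^{\prime}$. Those agents then hold at most $2$ items, each extra item having been unallocated, so the argument of Theorem~\ref{RR-Thm1}(i) applies verbatim to envy toward them. Envy toward 2-item bundles is handled as above.

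The main obstacle is the exit at line~\ref{line:allocate2}. There the feasible 2-matching $R^{\prime}$ from Procedure~\ref{procedure 2-matching} gives some agents of $N^{\prime}$ a pair on top of their item, so they end with $3$ items $\{g_i,g,g^{\prime}\}$, while others may keep a single item. The condition of line~\ref{line: swap h} failing means $v_h(p)\le v_h(A_h)$ for every pair $p\in P$ and every agent $h$. Removing $g_i$ from the 3-item bundle leaves exactly the pair $\{g,g^{\prime}\}\in P$, so every $h$ has $v_h(A_h)\ge v_h(A_i\setminus\{g_i\})$, which is EF1.

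It remains to check envy between two 3-item holders, and envy of a pair-holder from $N\setminus N^{\prime}$ toward a 3-item bundle. The same inequality covers both cases, since it was taken over all $h\in N$. Envy toward pair-holders and single-item holders is handled as before.

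The delicate part is ensuring that every item given out through $R^{\prime}$ to a singleton agent is either the partner in some pair of $P$ or is a lone extra item, in which case the Round-Robin invariant applies. I would rely on the structure of $R^{\prime}=R\cup R_2$, where every agent matched twice has both items in $S$ (Lemma~\ref{claim1}), so that these doubled items form exactly the pairs in $P$.
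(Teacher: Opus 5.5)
Your proposal is correct and follows essentially the same argument as the paper. The invariant that every agent weakly prefers her current bundle to any single unallocated item (established by Round-Robin and maintained by monotone values and by Envy Swap leaving $A_f$ unenvied) handles envy toward pair-holders and toward 2-item bundles, and the failure of the swap condition (no agent strictly prefers any pair of $P$ to her bundle) handles the 3-item bundles created by $R^{\prime}$. Your explicit appeal to monotonicity of $v_i(A_i)$, for items that became unallocated later, is a cleaner phrasing of the paper's ``otherwise $i$ would have taken $g$ in Round Robin or Envy Swap'' step.
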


    \begin{proof}
     We now show why the allocation is EF1.
    The Round Robin manner of how each agent is allocated the first of item of her bundle  alongside with the Envy Swap guarantee that after every iteration of while loop each agent prefers his item more than any item of $M_2$ and every single item of agents that have already allocated two items i.e. the agents of $N\setminus N^{\prime}.$
    Thus, the partial allocation after each time an agent $h$ exchanges her bundle with a feasible pair is EF1 since, even if an agent $i$ envies agent $i^{\prime}\in N\setminus N^{\prime}$ who have two items then $v_i(A_i)\geq v_i(A_{i^{\prime}}\setminus \{g\})$ for each $g\in A_{i^{\prime}}$. Otherwise, $i$ would have been allocated $g$ either in Round Robin or in Envy Swap.
    The agents of $N\setminus N^{\prime}$ when Algorithm~\ref{RR-match-v2} terminates are not allocated other item, thus the EF1 condition holds for agents who may envy them.
    The bundle of an agent $i^{\prime}\in N^{\prime}$ who is allocated either an item $g$ or a feasible pair $p$ when Algorithm~\ref{RR-match-v2} terminates is $A_{i^{\prime}}=\{g_i, g\}$ or $A_{i^{\prime}}=\{g_i, p\}$.
    For an agent $i$ who may envy $i^{\prime}$ it holds that $v_i(A_i)\geq v_i(A_{i^{\prime}}\setminus g_i)$. Otherwise $i$ would have been allocated $g$ either in Round Robin or in Envy Swap in the case where $A_{i^{\prime}}=\{g_i, g\}$ or $i$ would have satisfied the condition in line~\ref{line: envy-swap} and would have exchanged his bundle with $p$  if $A_{i^{\prime}}=\{g_i, p\}$.
    This concludes that allocation $A$ returned by Algorithm~\ref{RR-match-v2} is EF1. 
\end{proof}

  Finally, we argue that Algorithm~\ref{RR-match-v2} runs in polynomial time. 
    A maximal Hall violator of maximum deficiency and a feasible 2-matching can be computed in polynomial time (see Appendix \ref{app-subsec:matching}).
    Since the initial deficiency $|S|-|N(S)|$ is at most $\Delta-(n-\Delta)\leq \frac{n}{3}$ and after a polynomial number of iterations the deficiency is strictly decreased, then the overall number of iterations  that the while loop is executed is polynomial. This concludes the proof. 
    

\subsection{Further consequences of Algorithm~\ref{RR-match-v2}}
The following theorem shows that when $\Delta> \frac{2n}{3}$ and $m=2n-\Delta+1$, a necessary condition for the non-existence of EF1 allocations, even with additive valuations, is that $K_{\Delta,2n-2\Delta+1}$ is subgraph of the conflict graph $G$. Recall that $K_{\Delta,2n-2\Delta+1}$ is the conflict graph of the counterexample in Proposition~\ref{counter-example} for every $\Delta>\frac{2n}{3}$. The proof is similar to the one of Theorem~\ref{Thm EF1} in a simplified manner and is provided in the Appendix.  

\begin{restatable}{theorem}{THMBIGDELTA} \label{thm:4.3}
   In a $(N,M,V,G)$ instance when $m=2n-\Delta+1$, $\Delta\geq\frac{2n+1}{3}$ and $G$ does not have $K_{\Delta,2n-2\Delta+1}$ as a subgraph, Algorithm~\ref{RR-match-v2} returns an EF1 allocation.
\end{restatable}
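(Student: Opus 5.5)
I would follow the same skeleton as the proof of Theorem~\ref{Thm EF1}, observing that the regime $m=2n-\Delta+1$ is the smallest case not already covered by Theorem~\ref{RR-Thm1}(i). After the Round-Robin round we have $|M_2|=n-\Delta+1$. Every item of $M_2$ is adjacent in $F(A,M_2)$ to at least $n-\Delta$ agents, so any Hall violator $(S,N(S))$ has deficiency exactly $1$. We then have $|N(S)|=n-\Delta$ and $|S|=n-\Delta+1$, which forces $S=M_2$. If line~\ref{line:matching} finds a saturating matching, the argument of Theorem~\ref{RR-Thm1} finishes. Otherwise every item of $M_2$ has its neighbourhood equal to the same set $N(S)$ of $n-\Delta$ agents. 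Consequently each $g\in M_2$ is in conflict with all items $M_0$ held by the $\Delta$ agents of $N'\setminus N(S)$. Since $\deg(g)\le\Delta=|M_0|$, the neighbourhood of $g$ in $G$ is exactly $M_0$.

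This structural consequence is where the hypothesis enters. The edges between $M_0$ ($\Delta$ items) and $M_2$ ($n-\Delta+1$ items) form $K_{\Delta,n-\Delta+1}$. Any item of $M_s$ (the $n-\Delta$ items of agents in $N(S)$) that is in conflict with every item of $M_0$ would enlarge this to $K_{\Delta,n-\Delta+1+t}$. If all $n-\Delta$ items of $M_s$ were of that type, $G$ would contain $K_{\Delta,2n-2\Delta+1}$. Hence some $g^*\in M_s$ is feasible with some item of $M_0$. This is exactly the situation of Case C2, and it is the replacement for the degree counting that used $\Delta\le 2n/3$ in the main theorem.

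Next I would run the while loop and argue that it terminates after essentially one exchange. I would invoke Lemma~\ref{claim1} to get the feasible 2-matching; its proof should simplify here, since $M_2$-items conflict only with $M_0$ and hence pair freely with each other and with items of $N(S)$. If no agent prefers a pair in line~\ref{line: swap h}, Claim~\ref{claim2}(i) applies and we allocate $R'$. Otherwise an agent $h$ takes a pair and Envy Swap leaves some single-item bundle $A_f$. I must show the remaining items, namely $A_f$ together with the unpaired items of $M_2$, can be matched to the agents of $N'$. Since the deficiency is $1$, Claim~\ref{claim2}(ii),(iii) gives termination in Cases C2 and C3. So the crux is to exclude Case C1: $A_f=\{g\}$ with $g\in M_s$ conflicting with all of $M_0$.

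This exclusion is the step I expect to be the main obstacle. Claim~\ref{claim4} does it when $|S|=\Delta-2n_p$, which does not hold here. My first attempt would be to show that if C1 occurs, then $A_f$ merely joins $S$ for the next iteration while the deficiency stays $1$, in the spirit of Claim~\ref{claim3}. Each such event adds to $S$ a new item adjacent to all of $M_0$. By Claim~\ref{claim3}, $M_0$ never grows across iterations, so these C1 items accumulate a complete bipartite structure with a set that only shrinks. If C1 repeated until all $n-\Delta$ items of $M_s$ had been absorbed, then $2n-2\Delta+1$ items would be fully adjacent to a $\Delta$-set, contradicting the hypothesis. The bookkeeping to verify is that $M_0$ keeps size $\Delta$ for the contradiction, and that values strictly increase between iterations, which gives termination.

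Finally, EF1 follows verbatim from Lemma~\ref{lemEF1}. That lemma's argument used only the Round-Robin ordering, the Envy Swap invariant, and the fact that bundles have the form single item, pair, single item plus one, or single item plus pair. None of these depended on $\Delta\le 2n/3$.
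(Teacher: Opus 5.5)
Your proposal follows the paper's proof essentially step by step: deficiency $1$ forces $S=M_2$ and $|N'\setminus N(S)|=\Delta$, Claim~\ref{claim2}(ii),(iii) give exit in Cases C2 and C3, the absence of $K_{\Delta,2n-2\Delta+1}$ caps Case C1 at $n-\Delta-1$ occurrences, strict value increases give termination, and Lemma~\ref{lemEF1} gives EF1. The one inaccuracy is your claim that Envy Swap always leaves a single-item bundle: in later iterations Case C4 (a previously allocated pair is released) can occur, so, as the paper does, you should note that C4 leaves $M_0$, $M_s$ and the deficiency unchanged and therefore can only repeat finitely often by the value-monotonicity argument you already invoke.
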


\begin{proof}
    The proof is similar to proof of Theorem~\ref{Thm EF1} in a simplified manner. We follow the same notation as in the proof of Theorem~\ref{Thm EF1}.
    In line~\ref{line:matching} of Algorithm~\ref{RR-match-v2} since $m=2n-\Delta+1$ if there is no matching for the Hall violator it holds that $|S|=n-\Delta+1$, $|N(S)|=n-\Delta$ and $|N\setminus N(S)|=\Delta$ since every item is feasible with at least $n-\Delta$ agents. Since $|S|-|N(S)|=1$ and because of Claim~\ref{claim2}, Algorithm~\ref{RR-match-v2} will re-enter the while loop if $A_f$ falls into Case C1.

    Since the deficiency is never increased during the execution of Algorithm~\ref{RR-match-v2}, while no matching saturating every vertex of $M_2$ exists it holds that $|S|=n-\Delta-n_p+1$, $|N(S)|=n-\Delta-n_p$ and $|N^{\prime}\setminus N(S)|=\Delta$. Because of Claim~\ref{claim2}, Algorithm~\ref{RR-Match} will exit the while loop and will terminate in the first time that $A_f$ falls into Cases C2 or C3. 
    It continues while it falls into Cases C1 or C4. Thus, and in combination with Claim~\ref{claim3}, in every iteration $N(S)$ is a subset of the respective set in the previous iteration either with an item less (Case C1) or exactly the same (Case C4).
    
     Since in every iteration at least one agent strictly increases her value, after a polynomial number of iterations, Case C4 would be rendered impossible, provided that no item that had never  belonged to $S$ in the previous iteration appears in the set for the first time.
     
   The fact that $G$ has not $K_{\Delta,2n-2\Delta+1}$ as subgraph implies that from the $n-\Delta$ items belonging to the agents of $N(S)$ in the first iteration, at most $n-\Delta-1$ of them may be connected with every item belonging to the agents of $N\setminus N(S)$.
    This implies that Case C1 is can appear at most $n-\Delta-1$ times during the execution of Algorithm~\ref{RR-match-v2}.
    Thus, if $n_p=n-\Delta-1$ observe that the picture of the $F(A^{\prime}, M_2)$ would be the following:   $|S|=2$, $N(S)=1$ and $|N^{\prime}\setminus N(S)|=\Delta$ where there is at least one item belonging  to the agents of $N^{\prime}\setminus N(S)$ that is feasible with the one belonging to the single agent of $N(S)$ and Case C1 for $A_f$ is not possible. Thus, Algorithm~\ref{RR-match-v2} will terminate either after one iteration if $A_f$ falls into either Case C2 or Case C3 or after polynomial number of iterations if in all of them $A_f$ falls into Case C4.

    The proof that the returned allocation $A$ is EF1 and that Algorithm~\ref{RR-match-v2} runs in polynomial time is similar to the Theorem~\ref{Thm EF1}. 
\end{proof}

\section{Conclusions}

We have studied the existence of complete EF1 allocations in constrained fair division with a conflict graph. Our main technical contribution is the use of matching theory (and especially Hall's theorem) 
for establishing the existence and efficient computation of EF1 allocations for certain classes of valuation functions. The provided results demonstrate that the maximum degree of the conflict graph plays a crucial role in determining the existence of such fair allocations. An open question arising from our work is for which $m$ an EF1 allocation is guaranteed in the case that $\Delta\leq 2n/3$ and $m>2n$. Moreover, developing more refined approximation algorithms is also an interesting direction for future work. 

Beyond the open questions that follow directly from our results, there are other important research avenues as well. The maximum degree $\Delta$ provides only partial information about the conflict graph. Focusing on this parameter allows us to establish a robust worst-case analysis. However, a graph with a small number of vertices of degree $\Delta$ may differ significantly in the conflicts it generates. Investigating how other structural properties, such as the average degree or vertex connectivity could affect and strengthen the fairness guarantees (maybe in combination with $\Delta$) is an appealing direction for future work. Furthermore, our work has focused only on the EF1 criterion, which is already quite challenging, but several other fairness notions can also be explored. There has been some progress on this front (e.g., for the MMS criterion in \cite{Hummel2021}), but we are still far from having a more complete understanding on which notions are more amenable under this constrained model.

\newpage
{\Large \textbf{Appendix}}

\setcounter{section}{0}

\renewcommand{\thesection}{\Alph{section}}
\renewcommand{\thetheorem}{\thesection.\arabic{theorem}}
\renewcommand{\thelemma}{\thesection.\arabic{lemma}}
\renewcommand{\theproposition}{\thesection.\arabic{proposition}}
\renewcommand{\theequation}{\thesection.\arabic{equation}}
\renewcommand{\thefigure}{\thesection.\arabic{figure}}

\setcounter{claim}{0} 
\renewcommand{\theclaim}{\Alph{claim}}

\makeatletter
\@addtoreset{theorem}{section}
\@addtoreset{lemma}{section}
\@addtoreset{proposition}{section}
\@addtoreset{equation}{section}
\makeatother

\section{Missing Parts from Section~\ref{section: Preliminaries}}
\subsection{Proof of Proposition \ref{Hall-Cor}}

\HallCor*

\begin{proof}
   Assume that in $G$ there is no matching saturating every vertex of $X$.
   By Hall's Theorem there exist some $S\subseteq X$ where $|N(S)|<|S|$. 
   If $|S|\leq n^{\prime}-d$, since every vertex in $X$ has degree at least $n^{\prime}-d$ then $|N(S)|\geq n^{\prime}-d\geq |S|.$
   If $|S|>n^{\prime}-d$, then $|X\setminus S|<d$.
   Let $y$ be a vertex belonging to $Y\setminus N(S)$. 
   Since $y$ has degree at least $d$ and  $|X\setminus S|<d$, then by the Pigeonhole Principle it is not possible for $y$ to be connected only to vertices of $X\setminus S$ and this leads to a contradiction and concludes that $G$ has a matching saturating every vertex of $X$.  
\end{proof}



    \section{Missing Parts from Section~\ref{section: Additive}}

\subsection{Missing parts from Proof of Theorem~\ref{Thm EF1}}

  When Algorithm \ref{RR-match-v2} entered  the while loop for the first time it holds that $n_p=0$. 
Recall that $n_p=|N\setminus N^{\prime}|$ i.e. the number of agents that have already been allocated a feasible pair and they are not in $F(A^{\prime},M_2)$.
We prove Lemma~\ref{claim1} and  Claims~\ref{claim2}, \ref{claim3}, and \ref{claim4} by induction in the iterations of the while loop in lines~\ref{while3-start}-\ref{while3-end} of Algorithm~\ref{RR-match-v2}. 
Additionally, we state and prove Claim~\ref{claimcombined} which establishes some structural properties of the feasibility graph in each iteration that are necessary in the proofs of the other claims.
We show them in a unified manner for the first and the next iterations. 
We assume that all of them hold after several iterations and we prove that they also hold in the next one.
For the first iteration, which serves as the basis of the induction, we have $n_p=0$ and the only difference in the proof of them is that $A_f$ returned by Envy Swap cannot fall into Case C4 in the first iteration since no agent is allocated a feasible pair so far.

Before proving Lemma~\ref{claim1} and  Claims~\ref{claimcombined}, \ref{claim2}, \ref{claim3}, and \ref{claim4} , we provide a description of $F(A^{\prime},M_2)$ after several iterations of the while loop.
In $F(A^{\prime},M_2)$ each part has $n-n_p$ vertices.
Since the maximum degree in the conflict graph is $\Delta$, every vertex in $F(A^{\prime},M_2)$ has degree at least $n-\Delta-n_p$.
If there is no matching in $F(A^{\prime},M_2)$, for the Hall violator it holds $n-\Delta-n_p\leq |N(S)|<|S|$.
Moreover, because Claim~\ref{claim2} holds in the previous iterations, the items of $M_0$ are connected in the conflict graph with the items of $M_p$. 
Thus, $|S|\leq \Delta-2n_p$ and $|N(S)|<\Delta-2n_p$ and $|N^{\prime}\setminus N(S)|>n-n_p-(\Delta-2n_p)=n-\Delta+n_p.$
Recall that $M_s$ is the items belonging to the agents of $N(S)$ and $M_0$ those to the agents of $N^{\prime}\setminus N(S)$.
Every item of $S$ is feasible with at least $n-\Delta-n_p$ items of $M_s$ and each item of $M_0$ is feasible with at least $n-\Delta+n_p$ of $M_2\setminus S$ since it has $2n_p$ conflicts with the items of $M_p$.

Since in $F(A^{\prime},M_2)$ each part has $n-n_p$ vertices and $n-\Delta-n_p\leq |N(S)|<|S|\leq \Delta-2n_p$, let $|S|=n-\Delta-n_p+r+k$ and $|N(S)|=n-\Delta-n_p+r$ for some $r\geq 0$ and $k\geq 1$.

In the proof of the  Lemma~\ref{claim1} and  Claims~\ref{claimcombined}, \ref{claim2}, \ref{claim3}, and \ref{claim4}, because in the Envy Swap the agents that will remain in $F(A^{\prime},M_2)$ may have swap their bundles in the Envy Swap, we may consider that in $F(A^{\prime},M_2)$ the vertices of the one part is not the agents $i\in N^{\prime}$ but their respective items $M_s\cup M_0$. Equivalently to this notation,  if $A_i=\{g_i\}$ and $(i,g^{\prime})\in R^{\prime}$ for some agent $i\in N^{\prime}$, we consider that $(g_i,g)\in R^{\prime}$.

\firstclaim*
\begin{proof}
The fact that $|S|\leq \Delta-2n_p$ implies 
\begin{gather*}
n-\Delta-n_p+r+k\leq \Delta-2n_p\\
    k\leq 2\Delta-n-n_p-r
\end{gather*}
Since $(S, N(S))$ is a maximal Hall violator of maximum deficiency, the number of unmatched items $S_{un}$ of $S$ in a maximum size matching $R$ in $F(A^{\prime},M_2)$ is $|S|-|N(S)|=|S_{un}|=k$.
In the auxiliary graph constructed in line~\ref{prod-auxiliary}, the non-trivial sink $B_s$ in the DAG of strongly connected components consists of at least $n-\Delta-n_p$ edges of $R$ i.e. $|R_s|\geq n-\Delta-n_p$. 
This holds because each item in $S$ is feasible with at least $n-\Delta-n_p$ agents of $N^{\prime}$, thus the items of $S$ belonging to $B_s$ are feasible with at least $n-\Delta-n_p$ belonging to $B_s$ too.

Because $N^{\prime}=n-n_p$ and $|N(S)|=n-\Delta-n_p+r$, it holds that $|N^{\prime}\setminus N(S)|=\Delta -r$.
Since, every item of $S$ is connected in the conflict graph $G$ with every item of $M_0$, it follows that every item in $S$ may have at most $r$ other items with which is connected in $G$.

In $G_2$ constructed in line~\ref{prod-construct G2} it holds that $|S_{un}|=k$ and $|R_s|\geq n-\Delta-n_p$.
Moreover, since every items of $S$, so also of $S_{un}$ may have at most $r$ other other items with which is connected in $G$ except those belonging to the agents of of $N^{\prime}\setminus N(S)$, it follows that every item of $S_{un}$ is feasible with at least $n-\Delta-n_p-r$ vertices $(i,g)$ of $R_s$ in $G_2$.
To prove the existence of a matching $R_2$ saturating $S_{un}$ in $G_2$, and thus the existence of a feasible 2-matching w.r.t $S$ in $F(A^{\prime},M_2)$, it suffices to show that $k\leq n-\Delta-n_p-r$.
Since, $k\leq 2\Delta-n-n_p-r$ it suffices 
\begin{gather*}
    2\Delta-n-n_p-r\leq  n-\Delta-n_p-r\\
    2n-3\Delta\geq 0    
\end{gather*}
The fact that $\Delta\leq \frac{2n}{3}$  concludes the proof of the Lemma. 

\end{proof}

\claimcombined*

\begin{proof}
(i) Since $|M_0|=|N^{\prime}\setminus N(S)|>n-\Delta-n_p$, it holds that $|M_0|\geq n-\Delta-n_p+1$.
An item  $g$ of $M_0$ is in conflict with the items of $S$ and the items of $M_p$. This implies that it may have at most $C=\Delta-|S|-|M_p|$ other conflicts.
Since $|S|=n-\Delta-n_p+r+k$ and $|M_p|=2n_p$ it follows that 
\begin{align*}
   C= \Delta-|S|-|M_p| &=\Delta-(n-\Delta-n_p+r+k)-2n_p\\
    &=2\Delta-n-n_p-r-k
\end{align*}
To conclude the proof it suffices to show that $|M_0\setminus g|>2\Delta-n-n_p-r-k$ because then $g$ cannot be in conflict with every other item of $M_0$. 
The fact that $|M_0\setminus g|\geq n-\Delta+n_p$ implies that 
\begin{align*}
    |M_0\setminus g|-C&=(n-\Delta+n_p)-(2\Delta-n-n_p-r-k)\\
    &=(2n-3\Delta)+r+k+2n_p.
\end{align*}
The fact that $\Delta\leq \frac{2n}{3}$, $r,n_p\geq 0$ and $k\geq 1$ implies that $|M_0\setminus g|-(\Delta-|S|-|M_p)>0$ and this concludes the proof.

(ii) As described in the proof of Lemma~\ref{claim1}, $B_s$ contains at least $n-\Delta-n_p$ vertices of $N(S)$.
If $g\in S$ is not connected with any of them and since $g$ is connected with at least $n-\Delta-n_p$ vertices of $N(S)$, then $|N(S)|\geq 2(n-\Delta-n_p)$.
Since, $|N(S)|<\Delta-2n_p$ it follows that
\begin{align*}
    2(n-\Delta-n_p)&< \Delta -2n_p\\
    2n-2\Delta&<\Delta\\
    2n-3\Delta&<0
\end{align*}
which is a contradiction since $\Delta\leq\frac{2n}{3}$. 
\end{proof}

The proofs of Claims~\ref{claim2} and~\ref{claim3} rely on the construction of the $2$-matching $R^{\prime}$ and we shall utilize the auxiliary graph $F_{R^{\prime}}(A^{\prime}, M_2)$ obtained by orienting the edges of $F(A^{\prime}, M_2)$ as previously described or by reversing the direction of the edges of the matching $R_2$ of Procedure~\ref{procedure 2-matching} in $F_{R}(A^{\prime}, M_2)$.

\fourthclaim*
\begin{proof}
We show why for each of (i), (ii), (iii) and (iv)  Algorithm~\ref{RR-match-v2} exits the while loop.

(i). If $v_h(p)\leq v_h(A_h)$ for every feasible pair $p\in P$ and every agent $h\in N$ in line~\ref{line: swap h}, it is clear that Algorithm~\ref{RR-match-v2} allocates the remaining items via the $2$-matching $R^{\prime}$ in line~\ref{line:allocate2} and then exits the while loop.

In cases (ii), (iii) and (iv) we prove that after the reconstruction of the feasibility graph in line~\ref{line: reconstruct}, a matching saturating every vertex of $M_2$ exists in line~\ref{line: match}.
Let in these cases that agent $h$ left his bundle to receive a feasible pair $p$ and $A_f=\{g\}$ is the result of Envy Swap.
Let $x$ be the item or the pair matched with $g$ in the 2-matching $R^{\prime}$ and $g_p$ be the item $N(S)$ matched with $p$.
In $F(A^{\prime},M_2)$ as reconstructed in line~\ref{line: reconstruct} the items of $p$ are removed from $M_2$ while $g$ removed from $M_s\cup M_0$ and is added to $M_2$.
Let also $R_1$ be the subset of $ R^{\prime}$ after removing the edges $(g_i,g^1),(g_i,g^2)$ of $R^{\prime}$ where the vertices $g^1, g^2$ consist of a feasible pair $p^{\prime}$ for every feasible pair $p^{\prime}$.

(ii). In this case $g\in N(S)$. Let $u$ be the one unmatched vertex of $M_0$ and since $|S|-|N(S)|=1$ by the construction of the $(S,N(S))$ then $u$ is reachable by any item of $M_2\setminus S$.
To prove that a  matching saturating every vertex of $M_2$ exists in the reconstructed $F(A^{\prime},M_2)$ in line~\ref{line: match} it suffices to augment $R_1$ by an alternating path $P_1$ which starts from $g$ and ends to $u$ and if $x\neq p$ (then $x$ is a single item since $p$ is the only pair as $|S|-|N(S)|=1$) also by an alternating path $P_2$ where starts from $x$  and ends to $g_p$.
Since $A_f$ falls into Case C2, there is an item of $M_0$ which $q$ is feasible with $g$. If $q=u$ this concludes the existence of $P_2$, otherwise $q$ is matched in $R_1$ with some $q^{\prime}\in M_2\setminus S$.
Since $u$ is reachable by $q^{\prime}$ a path $P_2$ always exists.
Path $P_1$ is needed only if $x\neq p$, otherwise, if $x=p$, observe that $R_1$ augmenting by $P_1$ is a matching saturating every vertex $M_2$.
By the construction of $R^{\prime}$ it holds that $g_p$ is a vertex of $M_s\cap B_s$.
By Claim~\ref{claimcombined}, independently whether $x$ belongs to $B_s$ or not,  $x$ is feasible with a vertex $z$ of $M_s\cap B_s$ and thus there exists the path $P_2$ from $x$ to $g_p$ since $z$ and $g_p$ are both in the strongly connected component $B_s$ and only the  outgoing edges edges $g_p$ towards the vertices $p$ and also the ingoing edge of $x$ from $g$ have been removed.

(iii). This case is similar to case (ii). The only difference is that since $x\notin S$, the existence of the vertex $q$ for the path $P_1$ from $x$ to $u$ is guaranteed by Claim~\ref{claimcombined}.

(iv). Let $z_1,z_2$ the two items of $S$ that consist of the feasible pair $p_0$ which is the other feasible pair except $p$ since $|S|-|N(S)|=2.$ Let that $p_0$ is matched with $g_0$, i.e. $(z_1,g_0),(z_2,g_0)\in R^{\prime}$ and $(z_1,g_0)$ be the edge that belonged also in $R$, so on, is an edge of $B_s$.
This implies that there is an alternating path $P_1$ from $z_1$ to $g_p$. 
Preserving to $R_1$ the edge $(z_2,g_0)$ and augmenting this by the path $P_1$ implies that there is a matching saturating every vertex of the subgraph induced by $S\setminus p$ and $(N(S)$.
To show the existence of a matching in the reconstructed feasibility graph in line~\ref{line: reconstruct} it suffices to show that there exists also a perfect matching between $(M_2\setminus S)\cup \{g\}$ and $M_0\setminus \{g\}$.
Recall that $|S|=n-\Delta-n_p+r+k$ and $|N(S)|=n-\Delta-n_p+r$. In this case $k=2$.
This implies that $|M_0|=n-n_p-|N(S)|=\Delta-r$ and $|M_2\setminus S|=n-n_p-|S|=\Delta-r-2$.
Thus, both $(M_2\setminus S)\cup \{g\}$ and $M_0\setminus \{g\}$ have $\Delta-r-1$ vertices.
Moreover, as shown in the proof of Claim~\ref{claimcombined} each item of $M_0$ (so also $g$) may have at most $C=2\Delta-n-n_p-r-k$ other conflict items except those of $S$ and $M_p$.
Thus every items of  $M_0\setminus \{g\}$ is feasible with at least $n-\Delta+n_p+1$ items of $(M_2\setminus S)\cup \{g\}$ since
\begin{align*}
    (\Delta-r-1)-(2\Delta-n-n_p-r-k)=\\
    =n-\Delta+n_p+k-1=n-\Delta+n_p+1
\end{align*}
Let $A=(M_2\setminus S)\cup \{g\}$ and $B=M_0\setminus \{g\}$. are the parts of the feasibility graph. If no matching exists, then for some $X\subseteq  A$ such that $|N(X)|<|X|$.
Since every vertex of $B$ is connected with at least $n-\Delta+n_p+1$ vertices of $A$ then $|B\setminus N(X)|\geq n-\Delta+n_p+2$.
We will show that both cases $g\in X$ and $g\notin X$ lead to a contradiction.

If $g\in X$ then $g$ is in conflict with the items of $B\setminus N(X)$. Since $g$ has at most $C$ conflicts except $S$ and $M_p$, showing that $|B\setminus N(X)|>C$ contradicts this case. Since $|B\setminus N(X)|\geq n-\Delta+n_p+2$ and $k=2$ it suffices
\begin{align*}
     (n-\Delta+n_p+2)-(2\Delta-n-n_p-r-2)&> 0\\
     2n-3\Delta+2n_p+r+4&>0
\end{align*}
which is clear since $\Delta\leq \frac{2n}{3}$ and $r,n_p\geq 0$.

If $g\notin X$ then in $F(A^{\prime}, M_2)$, even if $g$ is feasible with some vertex of $X$, it holds that $|X|\geq |N(X)\cap M_0|$. This contradicts that $(S, N(S))$ is not a maximal Hall violator of maximum deficiency since  $S$ can be extended by the vertices of $X$ and this concludes the proof of this case. 
\end{proof}

\fifthclaim*
\begin{proof}
The fact that Algorithm~\ref{RR-match-v2} re-enter the while loop implies that the criteria of Claim~\ref{claim2} under which Algorithm~\ref{RR-match-v2} are not satisfied.
Let $F(A^*,M_2^*)$ be the feasibility graph when  Algorithm~\ref{RR-match-v2} re-enter the while loop and $F(A^{\prime},M_2)$ in the last iteration before this.
Let $(S^*,N(S^*))$ and $(S,N(S))$ be the Hall violators respectively and $M_0^*$ the respective vertices of $F(A^*,M_2^*)$ not belonging in $M_2^*$ and not in $N(S^*)$ too. 
This Claims says equivalently that there is no vertex $x$ such that $x\in N(S)$ in $F(A^{\prime},M_2)$ and also $x\in M_0^*$.
Assuming that this does not hold, let $X$ be the set of these vertices.
Let $Y$ be the vertices such that $Y\subset S$ and $Y\subset M_2^*\setminus S^*$. 
The vertices of $Y$ can be feasible only with vertices of $X$ from $M_0^*$ because $Y\subset S$. 
It follows that $|Y|<|X|$, otherwise $(S^*,N(S^*))$ would not be a maximal Hall violator of maximum deficiency since it could be extended by $Y$ and $X$.
Since $X\subset M_0^*$ the items of $X$ in $F(A^{\prime},M_2)$ can be feasible only with the items of $Y$ and the items of the feasible pair $p$ allocated to an agent in this iteration.
Let $z_1$ and $z_2$ be the two items that consist of $p$. 
We show that every case of which of $z_1$ and $z_2$ are feasible with some item of $X$ in $F(A^{\prime},M_2)$ lead to a contradiction.

If none of $z_1$ and $z_2$ are feasible with some item of $X$, it contradicts that $(S,N(S))$ is a maximal Hall violator of maximum deficiency, since $|Y|<|X|$ and removing them from $S$ and $(N(S)$ increases the deficiency.

Now, let that only one of $z_1$ and $z_2$ are feasible with some item of $X$, without loss of generality let $z_1$.
It follows that $|Y\cup\{z_1\}|\leq |X|$. 
The strict inequality contradicts for the same reason as in the previous case.
If $|Y\cup\{z_1\}|= |X|$, then, in $R$ the vertices of $Y\cup\{z_1\}$ would be matched with $X$, since $R$ saturates every vertex of $N(S)$. 
This contradicts that $z_1$ and $z_2$ consist of a feasible pair because $z_2$ is not feasible with any item of $X$.

If  both $z_1$ and $z_2$ are feasible with some item of $X$ then it holds that $|Y\cup\{z_1, z_2\}|\leq |X|+1$. 
If $|Y\cup\{z_1, z_2\}|\leq |X|$ contradicts as in first case.
If $|Y\cup\{z_1, z_2\}|= |X|$, in $R$ the vertices of $Y\cup\{z_1\}$ would have been  matched with different vertices of $X$, otherwise $R$ cannot saturate every vertex of $N(S)$, a contradiction.
Now, let $|Y\cup\{z_1, z_2\}|=|X|+1$.
If $z_1$ and $z_2$ that consist the pair $p$ are not matched in $R^{\prime}$ with some vertex of $X$ then $R^{\prime}$ cannot saturate every vertex of $N(S)$.
Otherwise, since the feasible pairs of $S$ are matched with items of $B_s$, it holds that $X\cap B_s\neq \emptyset$.
The vertices of $X$ are feasible with the  $|X|$ vertices matched with $X$ in $R$ and the one of $z_1$ and $z_2$, let $z_2$ without loss of generality, which is unmatched in $R$ and $(z_1,x)\in R$ for $x\in X$. 
These $|X|+1$ vertices belong to $Y\cup\{z_1, z_2\}$ and thus no other vertex belong to this set.
If there is an item $v$ in $S$ not matched with a vertex of $X$ in $R^{\prime}$, then there is an alternating path $P_1$ from $v$ to some vertex $x\in X\cap B_s$ because of Claim~\ref{claimcombined}. In $P_1$ there exists a vertex $v^{\prime}$ such that $v^{\prime}$ is not matched with some vertex of $X$ in $R^{\prime}$. Thus $v^{\prime}\in Y$ and $|Y|\geq |X|+2$ which leads to a contradiction.
The only remaining possible case is that $Y=S$, $X=N(S)$ and $|S|-|N(S)|=1$. By Claim~\ref{claim2} if $A_f$ falls int $C_2$ or $C_3$ the Algorithm~\ref{RR-match-v2} will not re-enter the while loop. $C_1$ is not possible since the item of $A_f$ belong to $N(S)=X$ and in this case we have defined $X$  as the set of vertices of that would be located in $M_0^*$ in the next iteration.
These cases are enough for the induction basis.
If $A_f$ is a feasible pair $p^{\prime}$ i.e falls into Case C4, if $X=N(S)$ and  Algorithm~\ref{RR-match-v2} re-enter the while loop combining with the induction hypothesis implies that $S^*$ contain only one or both items of $S$ and $N(S^*)=\emptyset$. 
This implies that items of $S^*$ are in conflict with $M_0$ ans $N(S)$.
Since $|M_0|\geq n-\Delta+n_p+1$ and $|N(S)|\geq n-\Delta-n_p$ it follows for the conflicts of items of $S^*$ are more that $\Delta$ since
\begin{align*}
    |M_0|+|N(S)|&\geq (n-\Delta+n_p+1)+ (n-\Delta-n_p)\\
    &\geq 2n-2\Delta+1\\
    &\geq 2n-\frac{4n}{3}+1\\
    &=\frac{2n}{3}+1> \Delta.
\end{align*}
This leads to a contradiction and concludes the proof.

\end{proof}

\sixthclaim*
\begin{proof}
Because of Claim~\ref{claim3}, the items of $M_0$ in $F(A^{\prime},M_2)$ are connected in the conflict graph $G$ with every item in $M_p$ where $|M_p|=2n_p$.
If $|S|=\Delta-2n_p$ and $A_f$ can fall into Case C1, then the items belonging to the agents of $N^{\prime}\setminus N(S)$ would be connected in the conflict graph $G$ with $\Delta-2n_p+2n_p+1=\Delta+1$, a contradiction. 
\end{proof}

\end{document}